\documentclass[english,12pt]{article}
\usepackage[latin9]{inputenc}
\usepackage{geometry}
\geometry{verbose,tmargin=1in,bmargin=1in,lmargin=1in,rmargin=1in}
\usepackage{float}
\usepackage{amsthm}
\usepackage{amsmath}
\usepackage{amssymb}
\usepackage{graphicx}
\usepackage{setspace}
\doublespace
\usepackage{esint}  
\usepackage[authoryear]{natbib}
\usepackage{hyperref}
\usepackage{comment}
\usepackage[normalem]{ulem}
\usepackage{array}
\usepackage{multirow}
\usepackage{float}
\usepackage{cancel}
\usepackage{enumerate}
\makeatletter
\makeatother
\usepackage{pgf,tikz}
\usepackage{mathrsfs}
\usetikzlibrary{arrows}
\usepackage{babel}
\usepackage{bbm}

\theoremstyle{plain}
\newtheorem{theorem}{\protect\theoremname}
\theoremstyle{plain}
\newtheorem*{theorem*}{\protect\theoremname}
\theoremstyle{definition}
\newtheorem{definition}{\protect\definitionname}
\theoremstyle{definition}
\newtheorem*{definition*}{\protect\definitionname}
\theoremstyle{assumption}

\theoremstyle{plain}
\newtheorem{corollary}{\protect\corollaryname}
\theoremstyle{plain}
\newtheorem{lemma}{\protect\lemmaname}
\theoremstyle{plain}
\newtheorem{proposition}{\protect\propositionname}
\theoremstyle{plain}
\newtheorem{conjecture}{\protect\conjecturename}
\theoremstyle{definition}

\theoremstyle{plain}

\theoremstyle{plain}

\theoremstyle{plain}

\theoremstyle{plain}

\providecommand{\claimname}{Claim}
\providecommand{\conjecturename}{Conjecture}
\providecommand{\corollaryname}{Corollary}
\providecommand{\lemmaname}{Lemma}
\providecommand{\definitionname}{Definition}
\providecommand{\assumptionname}{Assumption}

\providecommand{\factname}{Fact}
\providecommand{\propositionname}{Proposition}
\providecommand{\remarkname}{Remark}
\providecommand{\theoremname}{Theorem}
\providecommand{\resultname}{Result}
\providecommand{\observationname}{Observation}

\usepackage{subcaption}
\makeatletter
\renewcommand{\fnum@figure}{Figure \thefigure}
\makeatother

\usepackage[font=footnotesize]{caption} 

\usepackage{physics}
\usepackage{enumitem}
\usepackage{pgfplots}
\usepackage{mathtools}

\DeclareMathOperator*{\argmax}{arg\,max}
\DeclareMathOperator*{\argmin}{arg\,min}
\DeclareMathOperator{\supp}{Supp}
\DeclareMathOperator{\proj}{Proj}

\usepackage{istgame}
\usetikzlibrary{shapes.geometric,arrows}

\usepackage{xcolor}
\definecolor{myblue}{RGB}{0,8,126}
\definecolor{mygreen}{RGB}{1,117,25}
\definecolor{mygrey}{RGB}{75,77,101}
\definecolor{mypurple}{RGB}{116,1,116}
\definecolor{mylightgrey}{RGB}{214,214,214}
\definecolor{mywinered}{RGB}{183,0,56}
\definecolor{mybrown}{RGB}{183,116,56}
\definecolor{myorange}{RGB}{255,116,1}

\newcommand{\pdfcolor}{blue!85!black}
\hypersetup{
    colorlinks=true,
    linkcolor=\pdfcolor,
    citecolor=\pdfcolor,
    urlcolor=\pdfcolor}


\begin{document}

\title{Private Experimentation, Data Truncation, and Verifiable Disclosure}

\author{Yichuan Lou\thanks{Department of Economics, The University of Tokyo. E-mail: \url{yichuanlou@e.u-tokyo.ac.jp}.}\\ \textbf{\small preliminary and incomplete}
}
\date{\today}

\maketitle
\thispagestyle{empty}
\begin{abstract}
\noindent
A sender seeks to persuade a receiver by presenting evidence obtained through a sequence of private experiments. The sender has complete flexibility in his choice of experiments, contingent on the private experimentation history. The sender can disclose each experiment outcome credibly, but cannot prove whether he has disclosed everything. By requiring `continuous disclosure', I first show that the private sequential experimentation problem can be reformulated into a static one, in which the sender chooses a single signal to learn about the state. Using this observation, I derive necessary conditions for a signal to be chosen in equilibrium, and then identify the set of beliefs induced by such signals. Finally, I characterize sender-optimal signals from the concavification of his value function constrained to this set.

\end{abstract}

\newpage
\section{Introduction}

This paper deals with an situation in which one agent (the sender) seeks evidence to persuade another agent (the receiver) to take certain actions. The evidence is generated by a series of private experiments where the sender can flexibly design each experiment contingent on the history. The sender can disclose each piece of evidence credibly, but cannot prove whether he has disclosed everything. The sender has a choice of concealing information through `right truncation' or `right censoring': disclosing all previous evidence is necessary to reveal a specific piece of evidence, but it is possible to keep any subsequent evidence concealed. The restriction to right-truncated data is also called `continuous disclosure' as the sender cannot cherry-pick outcomes in a discontinuous manner.

There are various examples in real life where data must be disclosed continuously from time $0$ to a certain cutoff, without the ability to selectively disclose certain data points. One example is in financial reporting, where a company chooses to release information about its financial performance. In this case, the company may disclose all relevant financial information up until certain point but not necessarily any information that came afterward. Similarly, in scientific research, a researcher collects and analyzes data, and may only report on data that was collected up to some endogenous cutoff date, disregarding any data collected after that point. In both cases, the requirement for continuous disclosure is intended to mitigate the risk of data manipulation.

This paper focuses on the equilibrium outcomes of such evidence collection and disclosure games. I study the type of information that is generated and disclosed in equilibrium. I also characterize optimal evidence collection from the sender's perspective.

With this goal in mind, I consider a communication game between a sender (he) and a receiver (she). Initially, the sender is uninformed and shares the same prior belief as the receiver. The sender covertly acquires information about the state of the world. In particular, he may run a series of experiments and tailors the characteristics of each experiment contingent on past experimentation. After choosing to stop experimenting, the sender can reveal experiment outcomes in the manner of right truncation as mentioned above. The receiver observes the sender's message, and then she takes an action.

The main results of the paper are as follows:  

(A) By requiring `continuous disclosure', I observe that the private sequential experimentation problem can be reformulated into a static one, in which the sender chooses a single signal to learn about the state, without a second chance. 

(B) The condition that the sender has no incentive to acquire further information reduces the set of beliefs that can be induced in equilibrium. This insight allows me to take a belief-based approach, commonly used in the communication literature. More specifically, the sender's payoff under the optimal signal is precisely the concavification of his utility function across those beliefs, evaluated at the prior.

\paragraph{Related Literature.} The strategic/voluntary disclosure of verifiable information goes back to \cite{GrossmanHart:80}, \cite{Grossman:81}, and \cite{Milgrom:81}, who establish the `unraveling' result that in any equilibrium Sender fully reveals his information---if some types pool, at least one of them is `better' than the `average' and will prefer to reveal himself. The result relies on certain assumptions including common knowledge that Sender is exogenously and privately informed, disclosing is costless, and information is verifiable. In contrast to this literature, I show that if private information is endogenous and gathered covertly, then essentially all information will be revealed in equilibrium; if there is an equilibrium in which information is withheld, the outcome must be the same as in another equilibrium with full revelation.\footnote{The covert nature of information acquisition is not essential to full revelation. \cite{GentzkowKamenica:17b}, for example, establish a similar result in the overt case.}

The paper contributes to a few strands of the persuasion literature in which the production of evidence is endogenous. \cite{BrocasCarrillo:07}, \cite{FelgenhauerSchulte:14}, \cite{FelgenhauerLoerke:17}, \cite{HenryOttaviani:19}, and \cite{Herresthal:22} consider environments in which information acquisition is sequential: Sender decides at each instant whether to continue or to stop experimenting. In a framework with symmetric information (i.e., both experiments and outcomes are publicly observable), \cite{BrocasCarrillo:07} and \cite{HenryOttaviani:19} show that when Sender controls the flow of public information, the structure of the solution is closely related to the one characterized in \cite{KamenicaGentzkow:11}. Assuming experimentation is covert (both experiments and outcomes are private information), \cite{FelgenhauerSchulte:14}, \cite{FelgenhauerLoerke:17} and \cite{Herresthal:22} study the incentives to collect information sequentially and to disclose privately observed results. By mandating `continuous disclosure' and limiting Sender's ability to certify what experiments were conducted, a contribution of my paper is the observation that the dynamic game can be equivalently transformed into a static one amenable to simple analysis.

Another strand of the literature considers environments in which information acquisition is one-shot; Sender is constrained to experiment only once. \cite{GentzkowKamenica:17b} study overt acquisition of evidence in a disclosure model where each type can perfectly self-certify and show that disclosure requirements never change the set of equilibrium outcomes. \cite{DeMarzoKremerSkrzypacz:19} and \cite{Shishkin:23} endogenize Sender's evidence in the voluntary disclosure model of \cite{Dye:85} and \cite{JungKwon88} with positive probability of a null outcome. \cite{Henry:09} studies private experimentation where Sender ex ante chooses how much costly research to perform in both overt and covert cases. He shows that Sender maybe induced to conduct extra research in the latter case, to counteract Receiver's inference that disclosure is selective. \cite{Escude:23} addresses a comparative statics question of how the incentives to acquire and disclose information depend on the verifiability of acquired information.

\section{A Motivating Example}

Consider an assistant professor that negotiates a salary raise with the dean. The dean is contemplating whether to give a small pay raise or a big pay raise, or to maintain the status quo. The dean would like to set the salary depending on the professor's research, which is either \textit{good} or \textit{bad}. The dean will give a big pay raise whenever she assigns a probability of at least $4/5$ to the research being \textit{good}, and will maintain the status quo whenever she assigns a probability of at least $3/5$ to the research being \textit{bad}. For intermediate beliefs, the dean offers a small pay raise. On the other hand, the professor gets utility $0$ if he does not get a pay raise, utility $1$ if he gets a small pay raise, utility $6/5$ if he gets a big pay raise, regardless of the state of the world---his research. The professor and the dean share a prior belief of $Pr(good) = 1/5$.

To persuade the dean to give a raise, the professor provides evidence of his research that stems from experimentation. One can think of the choice of experiments as consisting of decisions on whether to give a departmental seminar about his current research, which journal to submit a recent paper to, whether to go to the senior job market and have some outside offers, and so on. I formalize an experiment as distributions $\lambda(\cdot|good)$ and $\lambda(\cdot|bad)$ on some set of experimental outcomes.

Consider first public experimentation: the experimentation history is common knowledge. The professor privately observes the experimental outcomes and then discloses them strategically via verifiable messages. \cite{KamenicaGentzkow:11} show that this formulation of public acquisition of private information, yields equilibrium outcomes that are identical to those that arise in their full commitment model which additionally requires the professor to disclose his private information truthfully. Using their results, one can deduce that the professor optimally chooses a \textit{single} binary experiment, denoted by $\lambda_1$:
\begin{alignat*}{2}
    \lambda_1(g_1|good) &= 1\quad \lambda_1(g_1|bad)&&=3/8\\
    \lambda_1(b_1|good) &= 0\quad \lambda_1(b_1|bad)&&=5/8.
\end{alignat*}
This leads the dean to give a small pay raise with probability $50$ and to maintain the status quo with probability $50$.

Next, consider private experimentation: the professor privately and sequentially runs as many experiments (depending on what he has learned from previous experimental outcomes) as desired and selectively reveals the results. The professor's experimentation history after the first $t$ experiments is denoted by $h_t = \{(\lambda_i,y_i)\}_{i=1}^t$, where $y_i$ denotes the $i^{th}$ experimental outcome of $\lambda_i$. I assume the professor cannot manipulate or fabricate experimental outcomes. In this sense, $(\lambda_i,y_i)$ is `hard' information. However, the professor can conceal experimental outcomes in a `chronological' way: if he wants to conceal the $i^{th}$ experimental outcome, $(\lambda_i,y_i)$, he has to conceal all outcomes after the $i^{th}$ experiment. In other words, the professor's message takes a \textit{right-truncated} form---in order to reveal some experimental outcome, he must reveal all precedent outcomes as well. To illustrate, suppose the professor first submitted his paper to a top-tier journal and got rejected, and then submitted it to a second-tier journal and got accepted. Then, the dean is entitled to know that the professor received a rejection from the top-tier journal first if the professor wants to reveal that his paper got accepted.

It is by now well-known that Bayesian persuasion (or equivalently, public experimentation here) establishes an upper bound on Sender's gain from any possible communication protocols. This naturally leads to a question of whether the professor can achieve this upper bound as well when experimentation becomes private. The answer turns out to be: No. To see this, suppose the dean expects the professor to conduct $\lambda_1$, to stop after observing either realization, and to truthfully communicate the experimental outcome. If the outcome of $\lambda_1$ is $b_1$, then the professor will act as expected by stopping immediately and reporting $b_1$. This is because the professor cannot benefit by continuing experimenting and hopefully obtaining some positive outcome to reveal---he is required to report $b_1$ which leads to the posterior belief $Pr(good)=0$. By contrast, if the outcome of $\lambda_1$ is $g_1$, the professor could secretly deviate by running a second experiment, denoted by $\lambda_2$:
\begin{alignat*}{2}
    \lambda_2(g_2|good) &= 1\quad \lambda_2(g_2|bad)&&=0\\
    \lambda_2(b_2|good) &= 0\quad \lambda_2(b_2|bad)&&=1.
\end{alignat*}
That is, the professor additionally runs a fully informative experiment, one that leaves no uncertainty about the state. Then, if the outcome of $\lambda_2$ is $g_2$, the professor can `surprise' the dean by revealing $(\lambda_2,g_2)$ appended to $(\lambda_1,g_1)$, which will convince the dean that the research is $good$ with probability $1$ and will deliver the highest payoff of $6/5$. If the outcome of $\lambda_2$ is $b_2$, the professor can report $(\lambda_1,g_1)$ only, which still secures himself a payoff of $1$. As a result, the revealed evidence $(\lambda_1,g_1)$ will not be taken \textit{at face value} by the dean knowing that the professor will run an additional private experiment and reveals $(\lambda_1,g_1)$ if the second experiment fails.

The professor can, however, overcome the dean's skepticism by running the following single experiment, denoted by $\lambda_1'$:
\begin{alignat*}{2}
    \lambda_1'(g_1|good) &= 1\quad \lambda_1'(g_1|bad)&&=1/16\\
    \lambda_1'(b_1|good) &= 0\quad \lambda_1'(b_1|bad)&&=15/16.
\end{alignat*}
In fact, it is easy to see that under the experiment $\lambda_1'$ the professor no longer has an incentive to deviate by running additional private experiments and concealing unfavorable outcomes. This leads the dean to give a big pay raise with probability $40$ percent and to maintain the status quo with probability $60$ percent. As I show below, this is the best the professor can do from sequential private experimentation.

Figure \ref{fig:professor_dean} visualizes the example. Because the state is binary, I identify the dean's posterior belief $\mu$ with the probability it assigns to the professor's research being \textit{good}, $Pr(good)$. Putting this probability on the horizontal axis, the figure plots the highest value the professor can obtain from uninformative communication, Bayesian persuasion (equivalently, public experimentation), and private sequential experimentation. That is, the figure plots the professor's value function (left), along with his concave envelope (center) and concave envelope over the red shaded region (right). This red shaded region, as I will show below, represents all `additional-learning-proof' beliefs given which the professor prefers truthful revelation instead of generating further information followed by selective revelation.

\vspace{1em}
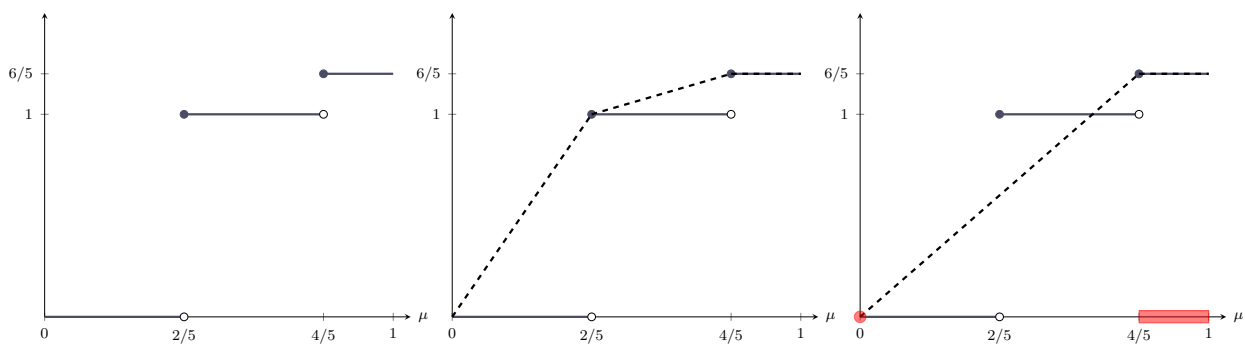
\begin{figure}[H]
\centering
\captionsetup{font=footnotesize}
\begin{minipage}{0.32\linewidth}
\begin{tikzpicture}[scale=.71]
    \begin{axis}
    [
    axis lines = left,
    xlabel={$\mu$},
    ylabel={},
    xmin=0, xmax=1.05,
    ymin=0, ymax=1.5,
    xtick={0,2/5,4/5,1},
    xticklabels={$0$,$2/5$,$4/5$,$1$},
    ytick={1,6/5},
    yticklabels={$1$,$6/5$},
    every axis x label/.style={at={(ticklabel* cs:1)},anchor=west,},
    scaled y ticks = false,every axis y label/.style={at={(ticklabel* cs:1)},anchor=south,},
    label style={font=\scriptsize},
    tick label style={font=\scriptsize},
    clip=false
    ]
    \draw[very thick,mygrey](axis cs:0,0)--(axis cs:2/5,0);
    \draw[very thick,mygrey](axis cs:2/5,1)--(axis cs:4/5,1);
    \draw[very thick,mygrey](axis cs:4/5,6/5)--(axis cs:1,6/5);
    
    \draw[fill=white] (axis cs:2/5,0) circle (.4ex);
    \draw[fill=white] (axis cs:4/5,1) circle (.4ex);
    \draw[fill=mygrey, mygrey] (axis cs:2/5,1) circle (.4ex);
    \draw[fill=mygrey, mygrey] (axis cs:4/5,6/5) circle (.4ex);
    \end{axis}
\end{tikzpicture}
\end{minipage}
\begin{minipage}{0.32\linewidth}
\begin{tikzpicture}[scale=.71]
    \begin{axis}
    [
    axis lines = left,
    xlabel={$\mu$},
    ylabel={},
    xmin=0, xmax=1.05,
    ymin=0, ymax=1.5,
    xtick={0,2/5,4/5,1},
    xticklabels={$0$,$2/5$,$4/5$,$1$},
    ytick={1,6/5},
    yticklabels={$1$,$6/5$},
    every axis x label/.style={at={(ticklabel* cs:1)},anchor=west,},
    scaled y ticks = false,every axis y label/.style={at={(ticklabel* cs:1)},anchor=south,},
    label style={font=\scriptsize},
    tick label style={font=\scriptsize},
    clip=false
    ]
    \draw[very thick,mygrey](axis cs:0,0)--(axis cs:2/5,0);
    \draw[very thick,mygrey](axis cs:2/5,1)--(axis cs:4/5,1);
    \draw[very thick,mygrey](axis cs:4/5,6/5)--(axis cs:1,6/5);
    
    \draw[fill=white] (axis cs:2/5,0) circle (.4ex);
    \draw[fill=white] (axis cs:4/5,1) circle (.4ex);
    \draw[fill=mygrey, mygrey] (axis cs:2/5,1) circle (.4ex);
    \draw[fill=mygrey, mygrey] (axis cs:4/5,6/5) circle (.4ex);
    
    \draw[very thick,dashed](axis cs:0,0)--(axis cs:2/5,1);
    \draw[very thick,dashed](axis cs:2/5,1)--(axis cs:4/5,6/5);
    \draw[very thick,dashed](axis cs:4/5,6/5)--(axis cs:1,6/5);
    \end{axis}
\end{tikzpicture}
\end{minipage}
\begin{minipage}{0.32\linewidth}
\begin{tikzpicture}[scale=.71]
    \begin{axis}
    [
    axis lines = left,
    xlabel={$\mu$},
    ylabel={},
    xmin=0, xmax=1.05,
    ymin=0, ymax=1.5,
    xtick={0,2/5,4/5,1},
    xticklabels={$0$,$2/5$,$4/5$,$1$},
    ytick={1,6/5},
    yticklabels={$1$,$6/5$},
    every axis x label/.style={at={(ticklabel* cs:1)},anchor=west,},
    scaled y ticks = false,every axis y label/.style={at={(ticklabel* cs:1)},anchor=south,},
    label style={font=\scriptsize},
    tick label style={font=\scriptsize},
    clip=false
    ]
    \draw[very thick,mygrey](axis cs:0,0)--(axis cs:2/5,0);
    \draw[very thick,mygrey](axis cs:2/5,1)--(axis cs:4/5,1);
    \draw[very thick,mygrey](axis cs:4/5,6/5)--(axis cs:1,6/5);
    
    \draw[fill=white] (axis cs:2/5,0) circle (.4ex);
    \draw[fill=white] (axis cs:4/5,1) circle (.4ex);
    \draw[fill=mygrey, mygrey] (axis cs:2/5,1) circle (.4ex);
    \draw[fill=mygrey, mygrey] (axis cs:4/5,6/5) circle (.4ex);
    
    \draw[very thick,dashed](axis cs:0,0)--(axis cs:4/5,6/5);
    \draw[very thick,dashed](axis cs:4/5,6/5)--(axis cs:1,6/5);
    
    \draw[red,fill=red,opacity=.5] (axis cs:0,0) circle (.6ex);
    \draw[fill=red,red,opacity=.5] (axis cs:4/5,-.03) rectangle (axis cs:1,.03);
    \end{axis}
\end{tikzpicture}
\end{minipage}
\caption{The professor-dean example.}
\label{fig:professor_dean}
\end{figure}

\section{The Model}

I analyze a game with two players: Sender (he) and Receiver (she). Both players' payoffs depend on Receiver's action $a\in A$ and an unknown state of the world $\omega\in \Omega$. Thus, Sender and Receiver have utility functions $v:A\rightarrow \mathbb{R}$ and $u:A\times \Omega\rightarrow \mathbb{R}$, respectively, and each aims to maximize expected payoffs.\footnote{The assumption of Sender's state-independent preferences is common in the literature on communication with hard evidence (e.g., \cite{GlazerRubinstein:04}, \cite{HartKremerPerry:17}, \cite{Rappoport:22}).}

I impose some technical restrictions on my model. Sender and Receiver share a prior $\mu_0\in \Delta(\Omega)$. Each of $\Omega$ and $A$ is a compact metrizable space. Both players' utility functions are continuous. At any given belief about $\omega$, I assume that Receiver has a unique optimal action, i.e., $a^*(\mu)\equiv \argmax_{a\in A}E_{\mu}[u(a,\omega)]$ is single-valued for all $\mu\in \Delta(\Omega)$. When Receiver holds some belief $\mu$, Sender's utility is
\begin{align*}
    \hat{v}(\mu)\equiv v(a^*(\mu)).
\end{align*}

\subsection{The Original Game}

\paragraph{Experiments.} A (Blackwell) experiment $\lambda$ consists of a sufficiently large experimental outcome space $Y$ and a family of distributions $\{\lambda(\cdot|\omega)\}_{\omega\in \Omega}$ over $Y$. Let $\Lambda$ denote the set of all experiments.

\paragraph{The Original Game.} The original game is a sequential game of private experimentation. Sender has access to all experiments in $\Lambda$. He can conduct as many experiments as he wants. Conditional on the state, experimental outcomes are drawn independently. The experimentation history after the first $t$ experiments is denoted by $h_t = \{(\lambda_i,y_i)\}_{i=1}^t$. At each $h_t$, Sender may either continue experimenting and run an additional experiment, or he may stop experimenting and send a message $n$. The set of feasible messages will be specified in the next paragraph. Receiver observes the message $n$, but she cannot observe the experimentation history at which Sender stops experimenting. Receiver then takes her action.

\paragraph{Truncation Messaging.} Sender cannot manipulate or make up experimental outcomes---each outcome is `hard' information. However, he can conceal experimental outcomes through `right truncation': disclosing all previous evidence is necessary to reveal a specific piece of evidence, but it is possible to keep any subsequent evidence concealed. Formally, let $n = \{(\lambda_i,y_i)\}_i$ denote Sender's message.\footnote{Note that Receiver does not only observe the outcomes contained in a message, but also the precision of the experiments with which these outcomes were generated.} A message $n = \{(\lambda_i,y_i)\}_i$ is \textit{$($right-$)$truncation-feasible} given a history $h_t=\{(\lambda_i,y_i)\}_{i=1}^t$, if $n\subset h_t$ and, moreover, $(\lambda_j,y_j)\in n$ implies $(\lambda_i,y_i)\in n$ for all $1\leq i\leq j$. Notationally, let $N(h_t)$ be the set of truncation-feasible messages at $h_t$. I assume that $N(h_t)$ always contains the empty history, denoted by $\emptyset$.

Since Sender's message takes the form of a right-truncated experimentation history, I sometimes refer to such disclosure requirement as `continuous disclosure'.

\paragraph{Strategies and Equilibrium.} Sender's strategy specifies his behavior at each experimentation history $h_t$ that he may observe. At each $h_t$, Sender may either continue experimenting with a further experiment with a history-dependent precision or he may stop experimenting and send his message. Receiver's strategy specifies an action for each message that she may observe. The equilibrium concept is the notion of weak sequential equilibrium in pure strategies. It is the equilibrium that satisfies sequential rationality and weak belief consistency. More details will be given in the following dual formulation to avoid repetition.

\subsection{The Alternate Game}

In this section, I examine an alternative environment in which Sender \textit{ex ante} chooses how much information to gather in one shot, making the acquisition of information static. The rest of the game remains unchanged, but an important issue remains unspecified: if Sender secretly deviates to an experiment with a different precision and different experiment outcomes, how does he pretend that he hasn't deviated when he sends his message? 

To specify such a disclosure game with one-shot evidence collection, I need to reformulate the definition of the set of all experiments in a way that will eliminate such ambiguity. The following formulation does the job.

\paragraph{Signals.} A signal $\pi$ is a partition of the \textit{expanded} state space $\Omega\times[0,1]$ s.t. $\pi\subset S$, where $S$ is the set of non-empty, Lebesgue measurable subsets of $\Omega\times [0,1]$ (\citealt{GreenStokey:78}; \citealt{GentzkowKamenica:17a,GentzkowKamenica:17b}). An element $s\in S$ is a signal realization. The interpretation of this formalism is that a random variable $x$, drawn uniformly from $[0,1]$, determines the signal realization conditional on the state; the conditional probability of $s$ given $\omega$ is $Pr(s|\omega) = \ell(\{x|(\omega,x)\in s\})$ where $\ell(\cdot)$ denotes the Lebesgue measure. Denote by $\mu_s$ the posterior belief about $\omega$ conditional on $s$.\footnote{For any $s$ with $Pr(s) = \sum_{\omega\in \Omega}Pr(s|\omega)\mu_0(\omega)>0$, we have $\mu_s(\omega) = \frac{Pr(s|\omega)\mu_0(\omega)}{Pr(s)}$.}

Let $\Pi$ be the set of all signals. Figure \ref{fig:illustration} illustrates the definition of a signal. In this example, $\Omega=\{L,R\}$ and $\pi = \{l,r\}$ where $l = (L,[0,0.75])\cup (R,[0,0.25])$ and $r = (L,[0.75,1])\cup (R,[0.25,1])$. Thus the signal $\pi$ is a partition of $\Omega\times [0,1]$, and the state-specific likelihood of signal realizations is $Pr(l|L) = Pr(r|R) = 0.75$.

Say $\pi'$ \textit{refines} an element $s$ of $\pi$ if there exists a subset of $\pi'$, denoted by $\mathcal{P}$, such that $s = \cup_{s'\in \mathcal{P}}s'$. Refer to any such $s'$ as a \textit{lower element} of $s$. I then say that $\pi'$ is a refinement of $\pi$, denoted by $\pi'\trianglerighteq\pi$, if $\pi'$ refines every element of $\pi$.\footnote{The refinement order implies the Blackwell informativeness order (\citealt{GreenStokey:78}; \citealt{BrooksFrankelKamenica:22}).} Figure \ref{fig:illustration} illustrates such a signal $\pi'$.

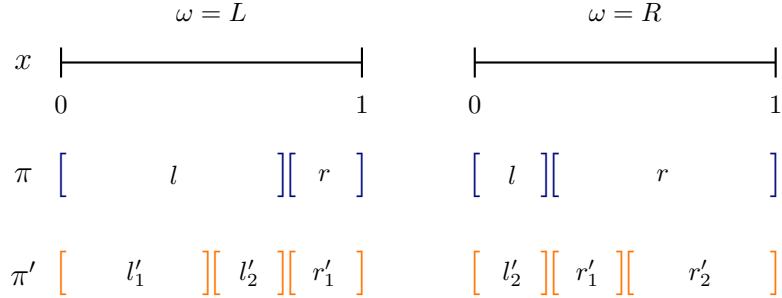
\begin{figure}[H]
\centering
\begin{tikzpicture}[mydrawstyle/.style={draw=black, thick}, x=1mm, y=1mm, z=1mm]
    \draw (-10,70) node {$x$};
    \draw[mydrawstyle](-5,70)--(35,70) node at (15,72)[above=2mm]{\footnotesize$\omega = L$};
    \draw[mydrawstyle](-5,68)--(-5,72) node[below=5mm]{\footnotesize$0$};
    \draw[mydrawstyle](35,68)--(35,72) node[below=5mm]{\footnotesize$1$};
  
    \draw (-10,55) node {$\pi$};
    \draw[very thick, myblue] (-5,55) node[scale=1.4] {[};
    \draw[very thick, myblue] (24.5,55) node[scale=1.4] {]};
    \draw (10,55) node {\footnotesize$l$};
    \draw[very thick, myblue] (25.5,55) node[scale=1.4] {[};
    \draw[very thick, myblue] (35,55) node[scale=1.4] {]};
    \draw[draw=none] (25.5,52.15) rectangle (35,57.85);
    \draw (30,55) node {\footnotesize$r$};
    
    \draw (-10,42) node {$\pi'$};
    \draw[very thick, myorange] (-5,42) node[scale=1.4] {[};
    \draw[very thick, myorange] (14.5,42) node[scale=1.4] {]};
    \draw (5,42) node {\footnotesize$l_1'$};
    \draw[very thick, myorange] (15.5,42) node[scale=1.4] {[};
    \draw[very thick, myorange] (24.5,42) node[scale=1.4] {]};
    \draw[draw=none] (15.5,39.15) rectangle (24.5,44.85);
    \draw (20,42) node {\footnotesize$l_2'$};
    \draw[very thick, myorange] (25.5,42) node[scale=1.4] {[};
    \draw[very thick, myorange] (35,42) node[scale=1.4] {]};
    \draw[draw=none] (25.5,39.15) rectangle (35,44.85);
    \draw (30,42) node {\footnotesize$r_1'$};

    \draw[mydrawstyle](50,70)--(90,70) node at (70,72)[above=2mm]{\footnotesize$\omega = R$};
    \draw[mydrawstyle](50,68)--(50,72) node[below=5mm]{\footnotesize$0$};
    \draw[mydrawstyle](90,68)--(90,72) node[below=5mm]{\footnotesize$1$};
    
    \draw[very thick, myblue] (50,55) node[scale=1.4] {[};
    \draw[very thick, myblue] (59.5,55) node[scale=1.4] {]};
    \draw (55,55) node {\footnotesize$l$};
    \draw[very thick, myblue] (60.5,55) node[scale=1.4] {[};
    \draw[very thick, myblue] (90,55) node[scale=1.4] {]};
    \draw[draw=none] (60.5,52.15) rectangle (90,57.85);
    \draw (75,55) node {\footnotesize$r$};
    
    \draw[very thick, myorange] (50,42) node[scale=1.4] {[};
    \draw[very thick, myorange] (59.5,42) node[scale=1.4] {]};
    \draw (55,42) node {\footnotesize$l_2'$};
    \draw[very thick, myorange] (60.5,42) node[scale=1.4] {[};
    \draw[very thick, myorange] (69.5,42) node[scale=1.4] {]};
    \draw[draw=none] (60.5,39.15) rectangle (69.5,44.85);
    \draw (65,42) node {\footnotesize$r_1'$};
    \draw[very thick, myorange] (70.5,42) node[scale=1.4] {[};
    \draw[very thick, myorange] (90,42) node[scale=1.4] {]};
    \draw[draw=none] (70.5,39.15) rectangle (90,44.85);
    \draw (80,42) node {\footnotesize$r_2'$};    
\end{tikzpicture}
\caption{Signals as partitions of $\Omega\times [0,1]$.}
\label{fig:illustration}
\end{figure}

\paragraph{Timing of the Alternate Game.} The game begins with Sender choosing a signal $\pi$, the choice of which is not observed by Receiver. Sender privately observes a realization $s$ (referred to as Sender's `type') and then sends a message $m\in M^{\pi}(s)$ to Receiver, where $M^{\pi}(s)$ is the set of messages available to the type $s$ given $\pi$ (described below). Receiver observes the message and chooses an action. Note that Sender, albeit granted complete flexibility over what evidence to collect, can only acquire a single signal. This makes his experimentation problem static.

\paragraph{Evidence Structures.} Without loss of generality, I assume that the message space is the type space $S$ with the interpretation that a type $s$ sending a message $s'$ is $s$ masquerading as $s'$. I refer to $M$ as an \textit{evidence system} of the environment and take it as a primitive. Let $M^{\pi}$ denote the (\textit{signal-contingent}) \textit{evidence structure} when the chosen signal is $\pi$. For any realization $s$ of a signal $\pi$, $M^{\pi}$ assigns a set $M^{\pi}(s)\subset S$ to $s$ as the set of feasible messages. Throughout the paper, I assume that $M^{\pi}$ satisfies the following conditions for any $\pi$:
\begin{itemize}
    \item[(C1)] for any $s\in \pi$,
    \begin{equation*}
        s\subset m\textup{ for all }m\in M^{\pi}(s).
    \end{equation*}
    \item[(C2)] for any $s\in \pi$,
    \begin{equation*}
        s\in M^{\pi}(s).
    \end{equation*}
    \item[(C3)] for any $s\in \pi$, and for any $\pi'\in \Pi$ that refines $s$ with $s=\cup_{s'\in \mathcal{P}}s'$ and $\mathcal{P}\subset \pi'$,
    \begin{equation*}
        M^{\pi}(s)\subset M^{\pi'}(s')\textup{ for all }s'\in \mathcal{P}.
    \end{equation*}
\end{itemize}

(C1) requires that if a subset of $\Omega\times [0,1]$ is reported at some realization, the true realization must be contained within that subset. This is a \textit{partial disclosure} setting where Sender can speak `nothing but the truth', but is not constrained to having to speak the `whole truth'. (C2) says that Sender can always report his exact type. (C3) states that any type of a signal can always be mimicked by any of its lower element of another signal that refines the type.

It is worth noting that Receiver does not observe what signal Sender chose and Sender cannot prove that.\footnote{In contrast, \cite{DeMarzoKremerSkrzypacz:19} and \cite{Shishkin:23} give Sender the ability to certify his signal.} Moreover, (C3) rules out \textit{perfect self-certification} by Sender, since a lower element (from another signal) can always masquerade.\footnote{The only exception is when the realized type takes the form of $\omega\times\{x\}$ where $\{x\}\subset [0,1]$ is a singleton. This, however, is a measure-zero set.} Taken together, Sender in the alternate game has a rather limited ability to prove what he has privately learned. Nevertheless, \textit{partial certification} is still feasible: by sending a message $m$ with $s\subset m$, a type $s$ can at least certify that the true state belongs to the projection $\proj_{\Omega}(m)$ of $m$ onto the state space $\Omega$.

\paragraph{Hierarchical Evidence Structures.} Given a signal $\pi$, a special case of evidence structures corresponds to a so-called \textit{hierarchical evidence structure} $M^{\pi}$, which, in addition to (C1)-(C3), satisfies
\begin{itemize}
    \item[(C4)] for any $s\in \pi$, $M^{\pi}(s)\subset 2^{\pi}$.
    \item[(C5)] for any $s\in \pi$, either $m\subset m'$ or $m'\subset m$ for all $m,m'\in M^{\pi}(s)$.
    \item[(C6)] for any $s\in \pi$, and for any $m\in M^{\pi}(s)$ such that $s'\in \pi$ and $s'\subset m$, it must be that $m\in M^{\pi}(s')$.
\end{itemize}

(C4) requires that the set of available messages to each type $s\in \pi$ is a subset of the power set of $\pi$. (C5) says that any two available messages to each type must be nested. Finally, (C6) says that if a message is available to one type and contains another type, then it must also be available to the latter type. 

From now on, I will use the term `evidence structure' to describe general evidence structures that satisfy (C1)-(C3), and the term `hierarchical evidence structure' to describe cases that additionally satisfy (C4)-(C6).

\paragraph{Strategies and Equilibrium.} Let $\sigma = (\pi,(\gamma^{\pi'})_{\pi'\in \Pi})$ denote a strategy for Sender. This consists of a choice of signal $\pi\in \Pi$, and a messaging policy $\gamma^{\pi'}:S\rightarrow \Delta(S)$ following each signal $\pi'$ with the property that every supported message lies in $M^{\pi'}(s)$ for each $s\in \pi'$. Let $\Tilde{\mu}(m)\in \Delta(\Omega)$ denote Receiver's belief about $\omega$ when she sees a message $m$. 

I use the notion of weak sequential equilibria of \cite{Myerson:91}. They are defined as equilibria that satisfy sequential rationality and weak belief consistency, where weak consistency means Bayesian consistency on the equilibrium path and off-path beliefs that are consistent with evidence. Formally, a pair $(\sigma,\Tilde{\mu})$ is an equilibrium if $\Tilde{\mu}$ satisfies weak belief consistency, and $\sigma=(\pi,(\gamma^{\pi'})_{\pi'\in\Pi})$ is a best response to $\Tilde{\mu}$ at every information set; that is,
\begin{itemize}
    \item[(a)] Weak belief consistency: the belief map $\Tilde{\mu}:S\rightarrow \Delta(\Omega)$ is consistent with evidence and is formed using Bayes rule whenever possible. In particular, if $m$ is an on-path message, then
    \begin{align*}
        \Tilde{\mu}(m)(\omega) = \frac{\mu_0(\omega)\Big[\sum_{\{s\in \pi:m\in\supp(\gamma^{\pi}(s))\}}Pr(s|\omega)\gamma^{\pi}(s)(m)\Big]}{\sum_{\omega'\in\Omega}\mu_0(\omega')\Big[\sum_{\{s\in \pi:m\in\supp(\gamma^{\pi}(s))\}}Pr(s|\omega')\gamma^{\pi}(s)(m)\Big]}\quad \textup{for all }\omega.
    \end{align*}
    If $m$ is an off-path message, then
    \begin{equation*}
        \supp(\Tilde{\mu}(m))\subset \proj_{\Omega}(m).\footnote{
        Note that for any belief $\mu'\in \Delta(\Omega)$ that satisfies $\supp(\mu')\subset \proj_{\Omega}(m)$, it is always possible to find an element $s'\subset \Omega\times [0,1]$ such that $s'\subset m$ and $\mu' = \mu_{s'}$.
        }
    \end{equation*}
    \item[(b)] Sender ex post optimality: for any $\pi'\in \Pi$, every $s'\in \pi'$ has $\gamma^{\pi'}(s')$ supported on 
    \begin{align*}
        \argmax_{m\in M^{\pi'}(s')} \hat{v}(\Tilde{\mu}(m)).
    \end{align*}
    \item[(c)] Sender ex ante optimality: $\pi$ is supported on
    \begin{align*}
        \argmax_{\pi'\in \Pi}\sum_{\omega,s,m}\mu_0(\omega)Pr(s|\omega)\gamma^{\pi'}(s)(m)\hat{v}(\Tilde{\mu}(m)).
    \end{align*}
\end{itemize}

Throughout the paper, I restrict Sender to \textit{pure} strategies in both the choice of a signal and the messaging policy, on and off the equilibrium path. I define the \textit{outcome} of the game to be the joint distribution of the state of the world, Receiver's belief, Receiver's action, and both players' payoffs. An outcome is an \textit{equilibrium outcome} if it corresponds to an equilibrium.

\subsection{Static Reproducibility of Sequential Experimentation}

In this subsection, I first show that a hierarchical evidence structure can be naturally induced by a sequence of signals that is increasing in the refinement order. Then, I consider a so-called hierarchical alternate game where Sender chooses a refinement-ordered signal sequence. I establish that the original game and the hierarchical alternate game are disclosure equivalent.

\paragraph{Refinement-Ordered Signal Sequences.} A refinement-ordered signal sequence $\psi$ is given as a sequence of partitions $(\pi_1,\pi_2,\dots,\pi_t)$ of some length $t\geq 0$, where each partition $\pi_i$ is finer than the previous partition $\pi_{i-1}$, i.e., $\pi_i\trianglerighteq \pi_{i-1}$. Let $\Psi$ be the set of all refinement-ordered signal sequences.

Note that every refinement-ordered signal sequence $\psi = (\pi_1,\dots,\pi_t)$ induces a unique hierarchical evidence structure, denoted by $M^{\psi}$: 
\begin{align*}
    M^{\psi}(s) \equiv \Big\{\pi_1(s),\pi_2(s),\dots,\pi_t(s)\Big\}\quad \textup{for all }s\in \pi_t,
\end{align*}
where $\pi_i(s)$ is the partition element of $\pi_i$ which includes $s$. It is easy to see that every such $M^{\psi}$ satisfies (C1)-(C6). Note that a refinement-ordered signal sequence $\psi = (\pi_1,\dots,\pi_t)$ is fully determined by the last signal $\pi_t$ and its induced hierarchical evidence structure $M^{\psi}$. I call the game in which the sender chooses a refinement-ordered signal sequence, or equivalently, a single signal together with a hierarchical evidence structure, the `\textit{hierarchical alternate game}'.

The proposition below shows that the original game and the hierarchical alternate game are \textit{disclosure equivalent}; that is, for every history-dependent experimentation strategy in the original game there exists a refinement-ordered signal sequence in the hierarchical alternate game, both of which lead to the same Sender information structure and message availabilities, and vice versa.


\begin{proposition}\label{prop:disclosure_equivalence}
    The original game and the hierarchical alternate game are disclosure equivalent.
\end{proposition}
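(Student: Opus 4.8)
The plan is to exhibit two explicit, mutually inverse constructions---one sending a history-dependent experimentation strategy of the original game to a refinement-ordered signal sequence, the other going back---and then to check that each preserves the two data that define disclosure equivalence: the joint law of the state $\omega$ and Sender's terminal type, and, for each terminal type, its menu of feasible messages together with the (face-value) posterior each message induces. The conceptual content is the dictionary ``a length-$k$ experimentation history $\leftrightarrow$ a cell of the $k$-th partition'' together with ``stopping after $k$ experiments $\leftrightarrow$ not refining that cell past level $k$''; the refinement order on $\Psi$ is exactly what records the chronological (right-truncation) nesting of messages, and the hierarchical structures that arise satisfy (C1)--(C6) automatically.

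\textbf{Original game $\Rightarrow$ hierarchical alternate game.} I would fix a pure strategy---a decision tree prescribing at each history $h$ either to stop or to run a next experiment $\lambda(h)\in\Lambda$---and let $T\le\infty$ be the supremum of its path-dependent stopping times. I would then build $\psi=(\pi_0,\pi_1,\dots,\pi_T)$ inductively, using the uniform coordinate $x\in[0,1]$ to simulate outcomes: set $\pi_0=\{\Omega\times[0,1]\}$; given $\pi_{i-1}$, keep intact the cell of every history at which Sender has already stopped, and split the cell corresponding to any still-active history $h_{i-1}$ (whose per-state Lebesgue measure is, inductively, $Pr(h_{i-1}\mid\omega)$), within each state's copy of $[0,1]$, into measurable pieces---one per outcome of $\lambda(h_{i-1})$---with measures proportional to $\lambda(h_{i-1})(\cdot\mid\omega)$. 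Non-atomicity of Lebesgue measure makes each split possible, and choosing the cutoffs by state-measurable quantiles keeps the pieces jointly measurable, so each $\pi_i$ is an admissible signal and $\pi_i\trianglerighteq\pi_{i-1}$; hence $\psi\in\Psi$. Telescoping the conditional outcome probabilities gives $Pr(s\mid\omega)=\ell(\{x:(\omega,x)\in s\})$ equal to the original-game probability of the corresponding terminal history, for every $s\in\pi_T$ and $\omega$; and for a type $s$ on a path that stopped at time $\tau$, the nested cells $\pi_0(s)\supset\cdots\supset\pi_\tau(s)=\pi_{\tau+1}(s)=\cdots=\pi_T(s)$ are precisely the right-truncated prefixes constituting $N(h_\tau)$---with $\pi_0(s)$ in the role of the empty message $\emptyset$ and each $\pi_k(s)$ inducing the same posterior as the corresponding disclosure. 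So $M^{\psi}(s)=N(h_\tau)$ as message menus.

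\textbf{Hierarchical alternate game $\Rightarrow$ original game.} Conversely, given $\psi=(\pi_1,\dots,\pi_t)$ with $\pi_i\trianglerighteq\pi_{i-1}$ for every $i$, I would use the deterministic-depth strategy whose $i$-th experiment, run at the length-$(i-1)$ history that has already revealed $\pi_{i-1}(s)$, asks ``which lower element of $\pi_{i-1}(s)$ in $\pi_i$ contains $(\omega,x)$?'', with state-conditional probabilities $\ell(\{x:(\omega,x)\in s'\})/\ell(\{x:(\omega,x)\in\pi_{i-1}(s)\})$ (set arbitrarily for states that cannot reach the cell), and have Sender stop after exactly $t$ experiments. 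Because each experiment merely decomposes the current cell's state-conditional measure, conditional independence across experiments holds automatically and the product telescopes, so the joint law of $\omega$ and the terminal type $s\in\pi_t$ matches $Pr(s\mid\omega)$; and the right-truncations feasible at the terminal history correspond to the nested family $\{\emptyset\}\cup\{\pi_1(s),\dots,\pi_t(s)\}=M^{\psi}(s)$, with matching posteriors. The two maps are inverse up to relabeling of outcomes and cells (and up to redundant messages inducing identical posteriors), which is exactly what disclosure equivalence requires.

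\textbf{Where the difficulty lies.} Once the history/partition dictionary is in place, preservation of the information structure and of the message menus is bookkeeping; the real work is twofold. First, realizing arbitrary history-dependent Blackwell experiments by nested \emph{measurable} partitions of $\Omega\times[0,1]$ needs non-atomicity, a measurable selection of cutoffs, and extra care when the outcome spaces or $\Omega$ are uncountable, so that each $\pi_i$ remains a signal in the sense of the model. Second---and I expect this to be the more delicate point---one must handle strategies of unbounded experimentation depth, $T=\infty$: either $\psi$ is permitted to be a countable refinement-ordered sequence, with the relevant posteriors and menus obtained as increasing limits, or, if $\Psi$ is kept to finite sequences, one must argue that an unbounded-depth strategy can be approximated by finite truncations without altering the objects relevant for equivalence.
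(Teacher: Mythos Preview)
Your construction is correct and formalizes exactly the history-to-partition-cell dictionary that the paper itself sketches: the paper omits a formal proof of Proposition~\ref{prop:disclosure_equivalence} and instead illustrates the correspondence via the professor--dean example (Figures~\ref{fig:sequential} and~\ref{fig:hierarchy}), which is precisely your forward/backward recipe specialized to that case. The measurability and unbounded-depth caveats you flag are the ``tedious complications'' the paper explicitly acknowledges setting aside.
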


To avoid some tedious complications that do not add much insight, I omit the proof of this proposition. Instead, I use the previous professor-dean example, along with some simple diagrams, to illustrate the main idea. Recall that in the example, the professor starts off by running the experiment $\lambda_1$. If the experiment outcome is $b_1$, he stops experimenting. If the experiment outcome is $g_1$, he runs an additional experiment $\lambda_2$ with two possible outcomes $b_2$ and $g_2$. Figure \ref{fig:sequential} depicts the whole experimenting process. The truncation-feasible messages for each history are given by
\begin{align*}
    N(\{(\lambda_1,b_1)\}) &= \Big\{\emptyset, (\lambda_1,b_1)\Big\},\\
    N(\{(\lambda_1,g_1), (\lambda_2,b_2)\}) &= \Big\{\emptyset,\{(\lambda_1,g_1)\}, \{(\lambda_1,g_1), (\lambda_2,b_2)\}\Big\},\\
    N(\{(\lambda_1,g_1), (\lambda_2,g_2)\}) &= \Big\{\emptyset,\{(\lambda_1,g_1)\}, \{(\lambda_1,g_1), (\lambda_2,g_2)\}\Big\}.
\end{align*}

\begin{figure}[H]
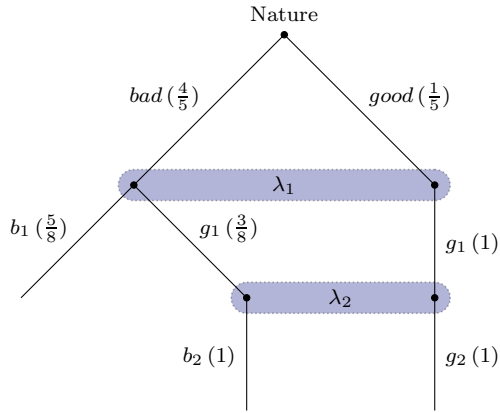

\centering
\begin{istgame}[font=\scriptsize]
\xtdistance{20mm}{40mm}
\istroot(zero){Nature}
\istb{bad\,(\frac{4}{5})}[al]
\istb{good\,(\frac{1}{5})}[ar]
\endist
\xtdistance{15mm}{30mm}
\istroot(one)(zero-1)
\istb{b_1\,(\frac{5}{8})}[al] 
\istb{g_1\,(\frac{3}{8})}[ar]
\endist
\istroot(two)(zero-2)
\istb{g_1\,(1)}[r] 
\endist
\xtInfosetO[fill=myblue,opacity=.3](one)(two){$\lambda_1$}
\istroot(four)(one-2)
\istb{b_2\,(1)}[l]
\endist
\istroot(five)(two-1)
\istb{g_2\,(1)}[r]
\endist
\xtInfosetO[fill=myblue,opacity=.3](four)(five){$\lambda_2$}
\end{istgame}
\caption{The professor-dean example.}
\label{fig:sequential}
\end{figure}

To reproduce Sender's information structure and truncated message space, I define a refinement-ordered signal sequence $\psi = (\pi_1,\pi_2)$ as depicted in the right panel of Figure \ref{fig:hierarchy}.\footnote{More specifically, $s_1 = bad\times [0,5/8]$, $s_2 = bad\times [5/8,1]$, and $s_3 = good\times [0,1]$.} An alternative illustration of $\psi$ is plotted as a partition tree in the left panel of Figure \ref{fig:hierarchy}. The induced hierarchical evidence structure $M^{\psi}$ can be represented by:
\begin{align*}
    M^{\psi}(s_1) &= \Big\{s_1,s_1\cup s_2\cup s_3\Big\},\\
    M^{\psi}(s_2) &= \Big\{s_2, s_2\cup s_3, s_1\cup s_2\cup s_3\Big\},\\
    M^{\psi}(s_3) &= \Big\{s_3, s_2\cup s_3, s_1\cup s_2\cup s_3\Big\}.
\end{align*}


\begin{figure}[H]
\centering
\begin{minipage}{0.45\linewidth}
\begin{tikzpicture}[node distance=2cm]
  \node[circle, draw, fill=black,inner sep=2pt, label=above:{$\{s_1,s_2,s_3\}$}] (123) {};
  \node[circle, draw, fill=black,inner sep=2pt, label=below:{$\{s_1\}$}] (1) [below left of=123] {};
  \node[circle, draw, fill=black,inner sep=2pt, label=right:{$\{s_2,s_3\}$}] (23) [below right of=123] {};
  \node[circle, draw, fill=black,inner sep=2pt, label=below:{$\{s_2\}$}] (2) [below left of=23] {};
  \node[circle, draw, fill=black,inner sep=2pt, label=below:{$\{s_3\}$}] (3) [below right of=23] {};
  \draw[-stealth,shorten >=2pt, shorten <=2pt,line width=1pt] (123) -- (1);
  \draw[-stealth,shorten >=2pt, shorten <=2pt,line width=1pt] (123) -- (23);
  \draw[-stealth,shorten >=2pt, shorten <=2pt,line width=1pt] (23) -- (2);
  \draw[-stealth,shorten >=2pt, shorten <=2pt,line width=1pt] (23) -- (3);
\end{tikzpicture}
\end{minipage}
\begin{minipage}{0.45\linewidth}
\begin{tikzpicture}[mydrawstyle/.style={draw=black, thick}, x=1mm, y=1mm, z=1mm,scale=.7]
    \draw (-10,70) node {\scriptsize $x$};
    \draw[mydrawstyle, -](-5,70)--(35,70) node at (15,72)[above=1mm]{\scriptsize$\omega = bad$};
    \draw[mydrawstyle](-5,68)--(-5,72) node[below=4mm]{\scriptsize$0$};
    \draw[mydrawstyle](35,68)--(35,72) node[below=4mm]{\scriptsize$1$};
    \draw[mydrawstyle, -](50,70)--(90,70) node at (70,72)[above=1mm]{\scriptsize$\omega = good$};
    \draw[mydrawstyle](50,68)--(50,72) node[below=4mm]{\scriptsize$0$};
    \draw[mydrawstyle](90,68)--(90,72) node[below=4mm]{\scriptsize$1$};
  
    \draw (-10,55) node {\scriptsize$\pi_1$};
    \draw[very thick, myblue] (-5,55) node[scale=1.4] {[};
    \draw[very thick, myblue] (19.5,55) node[scale=1.4] {]};
    \draw (7.25,55) node {\scriptsize$s_1$};
    \draw[very thick, myblue] (20.5,55) node[scale=1.4] {[};
    \draw[very thick, myblue] (35,55) node[scale=1.4] {]};
    \draw (27.75,55) node {\scriptsize$s_2\cup s_3$};
    \draw[draw=none] (20.5,50.9) rectangle (35,59.1);
    \draw[very thick, myblue] (50,55) node[scale=1.4] {[};
    \draw[very thick, myblue] (90,55) node[scale=1.4] {]};
    \draw (70,55) node {\scriptsize$s_2\cup s_3$};
    \draw[draw=none] (50,50.9) rectangle (90,59.1);
    
    \draw (-10,42) node {\scriptsize$\pi_2$};
    \draw[very thick, myorange] (-5,42) node[scale=1.4] {[};
    \draw[very thick, myorange] (19.5,42) node[scale=1.4] {]};
    \draw (7.25,42) node {\scriptsize$s_1$};
    \draw[very thick, myorange] (20.5,42) node[scale=1.4] {[};
    \draw[very thick, myorange] (35,42) node[scale=1.4] {]};
    \draw (27.75,42) node {\scriptsize$s_2$};
    \draw[draw=none] (20.5,37.9) rectangle (35,46.1);
    \draw[very thick, myorange] (50,42) node[scale=1.4] {[};
    \draw[very thick, myorange] (90,42) node[scale=1.4] {]};
    \draw (70,42) node {\scriptsize$s_3$};
    \draw[draw=none] (50,37.9) rectangle (90,46.1);
\end{tikzpicture}
\end{minipage}
\caption{A refinement-ordered signal sequence.}
\label{fig:hierarchy}
\end{figure}
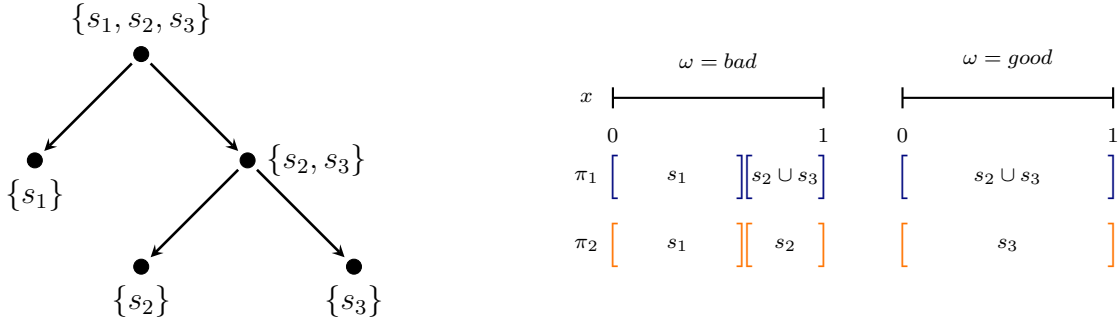

It is easy to see that the experimentation plan with $\lambda_1$ and $\lambda_2$ in the original game and the refinement-ordered signal sequence $\psi = (\pi_1,\pi_2)$ (or equivalently, the signal $\pi_2$ together with the hierarchical evidence structure $M^{\psi}$) in the hierarchical alternate game yield the same information structure for Sender. Moreover, there is a one-to-one mapping between $N(\cdot)$ and $M^{\psi}(\cdot)$, meaning that Sender has the same ability to prove his private information.

Proposition \ref{prop:disclosure_equivalence} allows me to abstract from a more complex analysis of the original sequential experimentation game, and instead to focus on the static hierarchical alternate game. In the hierarchical alternate game, Sender essentially chooses a single signal together with a hierarchical evidence structure, which is a special case of the alternate game considered in the previous subsection. 

Henceforward, I focus on the alternate game with general evidence structures satisfying (C1)-(C3). I will show that the additional conditions (C4)-(C6) for hierarchical evidence structures do not affect the set of equilibrium outcomes.

\section{Equilibrium Analysis of the Alternate Game}

In this section, I impose several restrictions on equilibrium strategies and beliefs and show that it is indeed without loss of equilibrium outcomes.

An equilibrium $(\sigma,\Tilde{\mu})$ is said to be \textit{fully revealing} if Sender perfectly reveals his type on the path of play.

\begin{lemma}\label{lma:fre}
    Restricting to fully revealing equilibria does not change the set of equilibrium outcomes.
\end{lemma}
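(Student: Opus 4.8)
The plan is to take an arbitrary equilibrium $(\sigma,\tilde\mu)$ with $\sigma=(\pi,(\gamma^{\pi'})_{\pi'})$ and construct a new equilibrium $(\sigma^*,\tilde\mu^*)$ that is fully revealing and induces the same outcome. The key observation is that the outcome depends on the primitives only through the map from the state $\omega$ to the \emph{distribution of Receiver beliefs} it induces on the path of play; so it suffices to build a fully revealing equilibrium that reproduces this belief distribution. First I would define the signal $\pi^*$ chosen in the new equilibrium to be a refinement of $\pi$ that splits each realization $s\in\pi$ according to which on-path message $m$ the original messaging policy $\gamma^\pi(s)$ sends (recall strategies are pure, so $\gamma^\pi(s)$ is a single message); concretely, for each on-path message $m$, pool together the portions of all $s$ with $\gamma^\pi(s)=m$ into a new realization $s^*_m\subset\Omega\times[0,1]$, so that $\mu_{s^*_m}=\tilde\mu(m)$ by Bayes rule. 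By construction $\pi^*\trianglerighteq\pi$ and $\{s^*_m\}$ are unions of lower elements, so (C3) guarantees that each $s^*_m$ inherits at least the message $m$ as feasible, i.e. $m\in M^{\pi^*}(s^*_m)$, and (C1)/(C2) continue to hold.

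Next I would specify the messaging policy $\gamma^{\pi^*}$: each type $s^*_m$ sends $m$ itself (feasible by the previous step, and also $s^*_m\in M^{\pi^*}(s^*_m)$ is available by (C2) but we use $m$ to keep Receiver's belief unchanged). For signals $\pi'\ne\pi^*$ off the path, retain the original $\gamma^{\pi'}$ (adjusted only where refinement forces relabeling). Receiver's beliefs are set to $\tilde\mu^*(m)=\tilde\mu(m)$ on every message $m$ that was on-path in the original equilibrium, and for messages that are off-path under $\pi^*$ I would assign beliefs supported on $\proj_\Omega(m)$ that make deviations unattractive — the natural choice is to carry over $\tilde\mu$ where it was already defined and otherwise pick the worst admissible belief for Sender, exactly as weak consistency permits. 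Then I would verify the three equilibrium conditions: (a) weak belief consistency holds because on-path beliefs are literally the Bayes-rule beliefs of the original equilibrium and off-path beliefs respect the projection constraint; (b) Sender ex post optimality holds because each type $s^*_m$ now faces a menu $M^{\pi^*}(s^*_m)\supseteq M^\pi(s)$ for any original $s$ feeding into it — here is the one place care is needed, see below; (c) Sender ex ante optimality holds because $\pi^*$ delivers exactly the original equilibrium payoff, and no other signal can do better, since any profitable deviation $\pi'$ together with a continuation messaging policy would have been a profitable deviation in the original game (the menus and Receiver beliefs off $\pi^*$ are unchanged).

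The main obstacle is condition (b), ex post optimality, because refining $\pi$ into $\pi^*$ \emph{enlarges} the message sets: by (C3), a finer realization $s^*_m$ may have access to messages that no original type mapping into it could send, so I must check that none of these new messages is strictly better for $s^*_m$ than $m$. The argument is that any message $m'\in M^{\pi^*}(s^*_m)$ has $s^*_m\subset m'$, hence $m'$ contains some original realization, hence $m'$ was available to that original type in the old game; since that type optimally chose its own on-path message (weakly preferring it to $m'$) and all on-path messages yield the same value under $\tilde\mu^*$ as under $\tilde\mu$, we get $\hat v(\tilde\mu^*(m'))\le \hat v(\tilde\mu(m))=\hat v(\tilde\mu^*(m))$ — \emph{provided} $m'$ is on-path. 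If $m'$ is off-path under $\pi^*$, I invoke the freedom in (a) to have already pinned $\tilde\mu^*(m')$ to a belief in $\proj_\Omega(m')$ that is no better for Sender than $\hat v(\tilde\mu(m))$; the existence of such a belief, and the mutual consistency of this choice across all off-path messages simultaneously, is the delicate point and is where I would spend the most effort, likely by ordering messages by inclusion and assigning beliefs greedily so that $\hat v\circ\tilde\mu^*$ is dominated by its original on-path values.
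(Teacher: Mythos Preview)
Your construction---pool together all on-path types that send the same message $m$ into a single new type $s^*_m$---is exactly the paper's approach. The gap is that you have the partial order backwards: forming $s^*_m=\bigcup\{s\in\pi:\gamma^\pi(s)=m\}$ produces a \emph{coarsening} of $\pi$, i.e.\ $\pi\trianglerighteq\pi^*$, not $\pi^*\trianglerighteq\pi$. (With pure messaging each $s$ sends a single message, so there is nothing to ``split''; you are only merging.) This reversal propagates through the rest of the argument.

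First, your appeal to (C3) to obtain $m\in M^{\pi^*}(s^*_m)$ fails: (C3) pushes message sets \emph{downward} to lower elements of a finer partition, not upward to coarser ones. There is no guarantee that the coarsened type $s^*_m$ can send the original message $m$. The clean fix is to abandon $m$ and have $s^*_m$ report itself, which is feasible by (C2); since $\mu_{s^*_m}=\tilde\mu(m)$ by Bayes rule, Receiver's belief and the induced outcome are unchanged. Second, the ``main obstacle'' you flag---that refinement enlarges message sets and might create profitable new messages for $s^*_m$---evaporates once the direction is corrected. Because $\pi$ refines $\pi^*$, (C3) gives $M^{\pi^*}(s^*_m)\subset M^{\pi}(s)$ for every original $s\subset s^*_m$; thus every message available to $s^*_m$ was already available to some original type that weakly preferred $m$, and ex post optimality is immediate rather than delicate. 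The greedy off-path belief assignment you sketch is therefore unnecessary for this step.
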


The formal proof is omitted for brevity. The basic idea behind this lemma, however, is simple. If there are multiple types sending the same message on the equilibrium path, then one can simply coarsen the signal by pooling together these types into a single type. Under the new signal, there exists a fully revealing equilibrium, and moreover, the induced outcome is exactly the same as that in the original equilibrium.

In a fully revealing equilibrium $(\pi,(\gamma^{\pi'})_{\pi'\in \Pi},\Tilde{\mu})$, Sender and Receiver will end up sharing a common posterior belief. Noting that $\gamma^{\pi}(s)=s$ is always feasible by (C2), I can restrict attention to the case where $\gamma^{\pi}(s)=s$ for all $s\in \pi$.

I also impose one substantive restriction on beliefs off the equilibrium path. In particular, I assume that beliefs arising from off-path messages are those that result in the minimum utility for Sender, subject to the constraint that they are consistent with the message that was sent.

\begin{definition}
    A belief map $\Tilde{\mu}$ is said to be worst-off-path-punishment (WOPP) if for any given off-path message $m$, 
    \begin{align}\label{eq:wopp}
        \Tilde{\mu}(m)\in \argmin_{\mu\in\big\{\mu'\in \Delta(\Omega):\supp(\mu')\subset \proj_{\Omega}(m)\big\}}\hat{v}(\mu).
    \end{align}
\end{definition}

Notice, first, that for any off-path message, a worst punishment from a WOPP belief map does not depend on Sender's actual type. Under the assumption that Sender's preference is state-independent, this implies that Receiver does not need to learn about Sender's type to impose the harshest credible punishment. Second, for any off-path message $m$, the set on the right-hand side of (\ref{eq:wopp}) does not depend on $M$. That being said, any evidence system $M$ and its signal-contingent evidence structure satisfying (C1)-(C3) will lead to the same set of worst punishment beliefs at $m$.

The next result shows that restricting to the strongest credible punishment off the equilibrium path is without loss.

\begin{lemma}\label{lma:off-path-worst}
    Requiring beliefs to be WOPP does not change the set of equilibrium outcomes.
\end{lemma}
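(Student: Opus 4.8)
The plan is to show that given any equilibrium $(\sigma,\Tilde{\mu})$ of the alternate game, one can replace $\Tilde{\mu}$ off the equilibrium path by a WOPP belief map $\Tilde{\mu}'$ without disturbing the on-path behavior or the induced outcome. By Lemma~\ref{lma:fre} I may assume $(\sigma,\Tilde{\mu})$ is fully revealing with $\gamma^{\pi}(s)=s$ for all $s\in\pi$, so that the equilibrium outcome is determined entirely by the chosen signal $\pi$ and the on-path posteriors $\{\mu_s\}_{s\in\pi}$. First I would define $\Tilde{\mu}'$ to agree with $\Tilde{\mu}$ on every on-path message, and on each off-path message $m$ to select some $\Tilde{\mu}'(m)\in\argmin_{\{\mu':\supp(\mu')\subset\proj_\Omega(m)\}}\hat v(\mu)$; such a minimizer exists because $\hat v$ is upper semicontinuous on $\Delta(\Omega)$ compact — which follows from continuity of $v$, continuity of $u$, uniqueness of $a^*(\mu)$, and a standard maximum-theorem argument — so the set $\{\mu':\supp(\mu')\subset\proj_\Omega(m)\}$ is a compact subset on which $\hat v$ attains its infimum. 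By the footnoted observation, for any such minimizer $\mu^\star$ there is an element $s^\star\subset m$ with $\mu_{s^\star}=\mu^\star$, so $\Tilde{\mu}'$ is consistent with evidence, hence satisfies (a).

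Next I would verify that $(\sigma,\Tilde{\mu}')$ remains an equilibrium. The key point is that lowering off-path continuation values only tightens Sender's incentive constraints: since $\Tilde{\mu}'(m)\in\argmin\hat v$ over the evidence-consistent set, we have $\hat v(\Tilde{\mu}'(m))\le \hat v(\Tilde{\mu}(m))$ for every off-path $m$, while $\hat v(\Tilde{\mu}'(m))=\hat v(\Tilde{\mu}(m))$ for on-path $m$. For ex post optimality (b): fix $\pi'$ and $s'\in\pi'$. Under $\sigma$ the messaging policy $\gamma^{\pi'}(s')$ was supported on $\argmax_{m\in M^{\pi'}(s')}\hat v(\Tilde{\mu}(m))$; I need it to still be optimal against $\Tilde{\mu}'$. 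On the equilibrium path this is immediate since $\hat v\circ\Tilde{\mu}'=\hat v\circ\Tilde{\mu}$ there and every other available message has weakly lower value than before. For an off-path signal $\pi'$, note that by (C2) the truthful message $s'\in M^{\pi'}(s')$; one must check the payoff Sender can secure by the best available message has not strictly increased — it cannot, because each $\hat v(\Tilde{\mu}'(m))\le\hat v(\Tilde{\mu}(m))$, and any on-path target message that $s'$ could mimic (with $s'\subset m$) keeps the same value. So the $\argmax$ set can only shrink or shift toward messages that were already optimal, and one shows the previously-supported messages remain in it (or, where they do not, a re-optimized policy still yields the same on-path outcome). For ex ante optimality (c): the value of any deviation signal $\pi'$ under $\Tilde{\mu}'$ is weakly below its value under $\Tilde{\mu}$ (since every continuation payoff weakly drops), while the value of the equilibrium signal $\pi$ is unchanged (its play is entirely on-path); hence $\pi$ remains a best response.

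Finally, since $\Tilde{\mu}'$ coincides with $\Tilde{\mu}$ on path and the on-path strategy is unchanged, the joint distribution of $(\omega,\text{belief},\text{action},\text{payoffs})$ is identical, so the equilibrium outcome is preserved; the reverse inclusion (every WOPP equilibrium outcome is an equilibrium outcome) is trivial. I expect the main obstacle to be the bookkeeping in step~(b) for off-path signals $\pi'$: there one must argue carefully that depressing off-path continuation values does not \emph{create} a profitable within-signal deviation that forces a different messaging policy — i.e.\ that whatever re-optimization Sender does after the belief change still leaves $\pi$ ex-ante optimal and does not alter the on-path outcome. The clean way to handle this is to observe that only messages $m$ with $s'\subset m$ are ever available to type $s'$, that the relevant on-path messages retain their values, and that $\hat v(\Tilde{\mu}'(m))\le\hat v(\Tilde{\mu}(m))$ pointwise, so the best attainable payoff for each type is nonincreasing and is still attained at a message delivering the original on-path belief whenever that message was available — making the outcome-invariance essentially automatic.
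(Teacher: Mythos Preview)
Your proposal is correct and follows essentially the same approach as the paper: replace $\Tilde\mu$ by a belief map that agrees on path and is WOPP off path, then observe that since every off-path continuation value weakly falls while on-path values are unchanged, Sender's incentive constraints can only slacken, so the original (possibly re-optimized off path) strategy remains an equilibrium with identical outcome. One small slip: to guarantee the \emph{minimum} of $\hat v$ is attained on the evidence-consistent set you need lower semicontinuity, not upper; but since $a^*$ is single-valued and the maximum theorem then yields continuity of $a^*$, $\hat v=v\circ a^*$ is in fact continuous, so the conclusion stands.
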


\begin{proof}
    It will suffice to show that given any equilibrium $(\pi,(\gamma^{\pi'})_{\pi'\in\Pi},\Tilde{\mu})$, replacing the belief map $\Tilde{\mu}$ by the following $\Tilde{\mu}'$ leaves the equilibrium outcome unchanged: $\Tilde{\mu}'(m) = \Tilde{\mu}(m)$ if $m$ is on-path, and $\Tilde{\mu}'(m)\in \argmin_{\mu\in\big\{\mu'\in \Delta(\Omega):\supp(\mu')\subset \proj_{\Omega}(m)\big\}}\hat{v}(\mu)$ if it is off-path. To see this, note that for any (on- and off-path) type who chose not to send an off-path message when facing $\Tilde{\mu}$, it is still incentive compatible not to do so when facing the WOPP belief rule $\Tilde{\mu}'$. Thus, $(\pi,(\gamma^{\pi'})_{\pi'\in\Pi},\Tilde{\mu}')$ still constitutes an equilibrium and also generates the same equilibrium outcome.
\end{proof}

Relying on Lemma \ref{lma:off-path-worst}, I will focus on equilibria with WOPP belief maps. Given this punishment rule, Lemma \ref{lma:truth_or_onpath} below shows that, when Sender deviates to another signal and obtains an off-path signal realization, it is without loss of generality to consider only a simple class of messaging policies.

\begin{lemma}\label{lma:truth_or_onpath}
    Let $(\pi,(\gamma^{\pi'})_{\pi'\in \Pi},\Tilde{\mu})$ denote an equilibrium. Then for any signal deviation $\pi'$ and its element $s'\in \pi'$, without loss of generality, I can restrict attention to the case where $\gamma^{\pi'}(s')$ is supported on $\{s'\}\cup (\pi\cap M^{\pi'}(s'))$. That is, the type $s'$ either fully reveals his type, or masquerades as an on-path type, if feasible.
\end{lemma}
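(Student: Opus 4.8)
The plan is to leverage the two normalizations already in hand---full revelation (Lemma~\ref{lma:fre}) and worst-off-path punishments (Lemma~\ref{lma:off-path-worst})---together with the observation that, under WOPP, the value $\hat{v}(\Tilde{\mu}(m))$ of an off-path message is monotone in the projection of $m$ onto $\Omega$. First I would invoke Lemmas~\ref{lma:fre} and~\ref{lma:off-path-worst} so that, without loss, $(\pi,(\gamma^{\pi'})_{\pi'\in\Pi},\Tilde{\mu})$ is fully revealing with $\gamma^{\pi}(s)=s$ for all $s\in\pi$ and $\Tilde{\mu}$ is WOPP. Under this normalization the on-path messages are precisely the elements of $\pi$, with $\Tilde{\mu}(s)=\mu_s$, whereas any off-path message $m$ satisfies $\hat{v}(\Tilde{\mu}(m))=\min\{\hat{v}(\mu):\supp(\mu)\subset\proj_{\Omega}(m)\}$.

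Next, fix a deviation $\pi'$ and an element $s'\in\pi'$. By ex post optimality, $\gamma^{\pi'}(s')$ is supported on $\argmax_{m\in M^{\pi'}(s')}\hat{v}(\Tilde{\mu}(m))$, so it suffices to exhibit a maximizer in $\{s'\}\cup(\pi\cap M^{\pi'}(s'))$. Let $m^{\ast}$ be any maximizer. If $m^{\ast}$ is on-path, then $m^{\ast}\in\pi$ by full revelation, hence $m^{\ast}\in\pi\cap M^{\pi'}(s')$ and we are done. If $m^{\ast}$ is off-path, I would compare it to the message $s'$, which is feasible by (C2). Since $s'\subset m^{\ast}$ by (C1), we have $\proj_{\Omega}(s')\subset\proj_{\Omega}(m^{\ast})$, and because $\supp(\mu_{s'})\subset\proj_{\Omega}(s')$ the posterior $\mu_{s'}$ lies in the set over which the worst punishment at $m^{\ast}$ is computed. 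If $s'$ is itself off-path, then $\hat{v}(\Tilde{\mu}(s'))$ is the minimum of $\hat{v}$ over the smaller set $\{\mu:\supp(\mu)\subset\proj_{\Omega}(s')\}$ and hence at least $\hat{v}(\Tilde{\mu}(m^{\ast}))$; if instead $s'$ is on-path, then $s'\in\pi\cap M^{\pi'}(s')$ and $\hat{v}(\Tilde{\mu}(s'))=\hat{v}(\mu_{s'})\geq\min\{\hat{v}(\mu):\supp(\mu)\subset\proj_{\Omega}(m^{\ast})\}=\hat{v}(\Tilde{\mu}(m^{\ast}))$. In either case $s'$ is also a maximizer, and $s'\in\{s'\}\cup(\pi\cap M^{\pi'}(s'))$.

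Finally I would verify that performing this replacement---for every $s'\in\pi'$ and every deviation $\pi'$ simultaneously---leaves an equilibrium with the same outcome: the belief map $\Tilde{\mu}$ is untouched; condition (b) holds since we only reselect within the argmax; condition (c) is unaffected because the on-path signal $\pi$ keeps the same value while each deviation $\pi'$ realizes, type by type, the same maximized continuation value as before; and the outcome depends only on on-path play, which has not changed. The step I expect to require the most care is the bookkeeping distinguishing whether $s'$ is on-path or off-path: in the on-path case one must use $\Tilde{\mu}(s')=\mu_{s'}$ together with admissibility of $\mu_{s'}$ for the minimization defining $\Tilde{\mu}(m^{\ast})$, and in the off-path case one must use monotonicity of the worst-punishment value under inclusion of $\proj_{\Omega}(\cdot)$---both of which hinge on $s'\subset m^{\ast}$, exactly the content of (C1).
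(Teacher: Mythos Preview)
Your proposal is correct and follows essentially the same route as the paper: both arguments use (C1) to obtain $\proj_{\Omega}(s')\subset\proj_{\Omega}(m)$ and then invoke the WOPP normalization to conclude that reporting $s'$ weakly dominates any off-path message $m$. Your write-up is in fact more careful than the paper's---you explicitly separate the case where $s'$ itself is on-path (so $\Tilde{\mu}(s')=\mu_{s'}$) from the case where it is off-path (so $\Tilde{\mu}(s')$ is a worst punishment), and you verify that the reselection preserves the equilibrium and outcome---but the underlying idea is identical.
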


\begin{proof}
    If $\gamma^{\pi'}(s')\in M^{\pi'}(s')\setminus \pi$, then the type $s'$ must send an off-path message. Following that $\Tilde{\mu}$ is WOPP and that $\proj_{\Omega}(s')\subset \proj_{\Omega}(m)$ due to (C1), it is easy to see that Sender weakly prefers reporting $s'$ to reporting any $m\in M^{\pi'}(s'))\setminus \pi$:
    \begin{align*}
        \min_{\mu\in\{\mu'\in \Delta(\Omega): \supp(\mu')\subset \proj_{\Omega}(s')\}}\hat{v}(\mu)\geq \min_{\mu\in\{\mu'\in \Delta(\Omega): \supp(\mu')\subset \proj_{\Omega}(m)\}}\hat{v}(\mu).
    \end{align*}
\end{proof}

Thus, Lemma \ref{lma:truth_or_onpath} gives a simple characterization of Sender's incentives given off-path private information (or type): either fully revealing his type, or masquerading as a certain on-path type.

Since Sender's choice of a signal is covert, any deviation acquiring a different signal is undetectable. Moreover, because the set of all signals is large, pinning down signals that Sender might potentially deviate to can be difficult. The next result, however, simplifies the analysis considerably: one only needs to check whether Sender wants to deviate to signals that are refinement-ordered.

\begin{lemma}\label{lma:refinement_deviation}
    Let $(\pi,(\gamma^{\pi'})_{\pi'\in \Pi})$ and $\Tilde{\mu}$ denote a strategy profile and a belief rule. Then, if there are no profitable deviations from the signal $\pi$ to another signal which is a refinement of $\pi$, then there are no profitable deviations from $\pi$ to any signal.
\end{lemma}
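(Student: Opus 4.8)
The plan is to show that any signal deviation $\pi'$ can be "completed" into a refinement of $\pi$ without hurting Sender, so that the most tempting deviations are already among the refinement-ordered ones. Fix a candidate deviation $\pi'$ and consider the coarsest common refinement $\pi'' \equiv \pi \vee \pi'$, whose elements are the nonempty intersections $s' \cap s$ with $s' \in \pi'$ and $s \in \pi$. By construction $\pi'' \trianglerighteq \pi$, so $\pi''$ is an admissible refinement-type deviation, and also $\pi'' \trianglerighteq \pi'$. The key point is that refining a deviation can only (weakly) help Sender: a type $s' \in \pi'$ is split into lower elements $\{s' \cap s : s \in \pi,\ s' \cap s \neq \emptyset\}$, and by (C3) every message available to $s'$ remains available to each of these lower elements, so each lower element can at least replicate whatever $s'$ would have done. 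Hence the expected payoff from $\pi''$, evaluated against the fixed belief rule $\Tilde\mu$ using the optimal (ex post) messaging response, is at least the expected payoff $s'$ obtains under $\pi'$ with its optimal response, averaged appropriately; summing over states and signal realizations, the value of deviating to $\pi''$ is at least the value of deviating to $\pi'$.

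Concretely, I would argue as follows. For any signal $\hat\pi$ and fixed $\Tilde\mu$, let $V(\hat\pi)$ denote the supremum of Sender's ex ante payoff over messaging policies that are ex post optimal against $\Tilde\mu$ — equivalently, $V(\hat\pi) = \sum_{\omega}\mu_0(\omega)\sum_{s\in\hat\pi} Pr(s|\omega)\, \max_{m\in M^{\hat\pi}(s)}\hat v(\Tilde\mu(m))$. First I would note that $V(\pi)$ is exactly Sender's equilibrium payoff (using $\gamma^\pi(s)=s$, which is optimal on path). Then, for the common refinement $\pi''$, I would use (C3) to get, for each $s'\in\pi'$ and each lower element $s'' = s'\cap s \subset s'$ with $s''\in\pi''$,
\begin{align*}
    \max_{m\in M^{\pi''}(s'')}\hat v(\Tilde\mu(m)) \;\geq\; \max_{m\in M^{\pi'}(s')}\hat v(\Tilde\mu(m)).
\end{align*}
Multiplying by $Pr(s''|\omega)=Pr(s'\cap s|\omega)$, summing over the lower elements of $s'$ (whose conditional probabilities sum to $Pr(s'|\omega)$), then summing over $s'\in\pi'$ and integrating against $\mu_0$, yields $V(\pi'')\geq V(\pi')$. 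Since $\pi''\trianglerighteq\pi$, the hypothesis that no refinement of $\pi$ is a profitable deviation gives $V(\pi)\geq V(\pi'')\geq V(\pi')$, so $\pi'$ is not profitable either. As $\pi'$ was arbitrary, no deviation is profitable.

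The main obstacle — and the step to handle with care — is the measurability/"is this a legitimate signal" bookkeeping for the common refinement: one must confirm that $\pi\vee\pi'$ is a partition of $\Omega\times[0,1]$ into Lebesgue-measurable sets (immediate, since finite or countable intersections of measurable sets are measurable) and, more importantly, that the induced evidence structure $M^{\pi''}$ is well-defined and continues to satisfy (C1)–(C3) so that $\pi''$ is a genuine element of the game Sender can deviate to; here one leans on the fact that $M$ is a primitive evidence system assigning feasible messages to every realization of every signal, and that (C3) is precisely the compatibility condition that makes the inequality above go through. A secondary subtlety is that the argument fixes $\Tilde\mu$ and compares best-response payoffs rather than equilibrium payoffs along the deviation path, but this is exactly what the definition of a profitable signal deviation requires (Receiver's belief rule is held fixed off-path), so no further work is needed. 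A remark worth including: the same reasoning shows it suffices to check deviations to refinements, but says nothing about whether the relevant refinements can be taken finite — that refinement is pinned down by $\pi$ and $\pi'$, which may themselves be arbitrary signals.
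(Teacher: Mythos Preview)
Your argument is correct and arrives at the same object as the paper—the join $\pi'' = \pi \vee \pi'$—but justifies the key inequality $V(\pi'') \geq V(\pi')$ by a different and more direct route. The paper first restricts to WOPP beliefs and invokes Lemma~\ref{lma:truth_or_onpath}: any type $s'\in\pi'$ that straddles several cells of $\pi$ can send only off-path messages, so its best payoff is $\hat v(\Tilde\mu(s'))$; the refined pieces $s\cap s'$ then weakly improve because their projection onto $\Omega$ is smaller (the WOPP minimum over a smaller support is weakly larger). You instead use (C3) head-on: since $\pi''\trianglerighteq\pi'$, every lower element $s''$ of $s'$ inherits the entire message set $M^{\pi'}(s')$, so it can always replicate $s'$'s best reply against the fixed $\Tilde\mu$. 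This monotonicity-in-refinement argument needs neither WOPP nor Lemma~\ref{lma:truth_or_onpath}, and in fact proves the lemma exactly as stated, for an arbitrary belief rule $\Tilde\mu$. The paper's approach buys a concrete picture of \emph{why} non-refinement deviations are unattractive (straddling types are trapped off path), which feeds the intuition for Lemma~\ref{lma:profitable_deviation_iff}; your approach buys generality and a one-line proof of the comparison, at the cost of that intuition.
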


\begin{proof}
    Let $\pi'$ denote a different signal which is not a refinement of $\pi$. It will suffice to show that Sender can do weakly better by switching from $\pi'$ to another signal $\pi''$ that is a refinement of $\pi$. By definition of $\pi'$, there exists a type $s'\in \pi'$ satisfying $s'\notin \pi$. Then there must exist a subset of $\pi$, denoted by $\mathcal{P}\subset \pi$, such that $s\cap s'\neq \emptyset$ for all $s\in \mathcal{P}$, and $s'\subset \cup_{s\in \mathcal{P}}s$. This implies that any message $m\in M^{\pi'}(s')$ is off-path due to (C1). By Lemma \ref{lma:truth_or_onpath}, the type $s'$ weakly prefers to fully reveal himself. Notice that $\hat{v}(\Tilde{\mu}(s'))\leq \hat{v}(\Tilde{\mu}(s))$ for all $s\in \mathcal{P}$ with $\Tilde{\mu}$ being WOPP. Hence, Sender can do weakly better by replacing the signal realization $s'$ with the corresponding set of signal realizations $\{s\cap s'\}_{s\in \mathcal{P}}$. Repeat this process until every such type $s'$ is exhausted in $\pi'$, which generates a new signal, denoted by $\pi''$. It is easy to see that Sender weakly prefers $\pi''$ to $\pi'$, and moreover, $\pi''$ is a refinement of $\pi$, completing the proof.\footnote{More precisely, $\pi''$ is, by construction, a refinement of $\pi'$ as well. Indeed, I can write $\pi''$ as $\pi'' = \pi\lor \pi'$, where $\lor$ denotes the join, that is, $\pi\lor\pi'$ is the coarsest refinement of both $\pi$ and $\pi'$.}
\end{proof}

Lemma \ref{lma:refinement_deviation} identifies the only class of deviations that needs to be checked at Sender's signal choice stage. This allows me to greatly simplify the analysis of the game. I simplify the analysis further by deriving a sufficient and necessary condition for the existence of a profitable signal deviation.

\begin{lemma}\label{lma:profitable_deviation_iff}
   Let $(\pi,(\gamma^{\pi'})_{\pi'\in \Pi})$ and $\Tilde{\mu}$ denote a strategy profile and a belief rule. Then, $\pi'$ is a profitable signal deviation from $\pi$ with $\pi'\trianglerighteq \pi$ if and only if $\pi'$ entails a signal realization $s'\subsetneq s\in \pi$ such that $\hat{v}(\Tilde{\mu}(s'))>\hat{v}(\Tilde{\mu}(s))$. Moreover, $\proj_{\Omega}(s')$ must be a strict subset of $\proj_{\Omega}(s)$.
\end{lemma}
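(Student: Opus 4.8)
The plan is to leverage the restrictions already in force — fully revealing play with $\gamma^{\pi}(s)=s$ on path, WOPP off‑path beliefs, and ex post optimality (b) — to write the ex ante value of an arbitrary refinement $\pi'\trianglerighteq\pi$ in closed form and then simply read off the equivalence. First I set up notation. Because $\pi'\trianglerighteq\pi$, every $s'\in\pi'$ lies in a unique cell $s\in\pi$, so setting $\mathcal{P}_s\equiv\{s'\in\pi':s'\subset s\}$ the family $\{\mathcal{P}_s\}_{s\in\pi}$ partitions $\pi'$ and $Pr(s)=\sum_{s'\in\mathcal{P}_s}Pr(s')$; for each $s$, either $\mathcal{P}_s=\{s\}$ (that cell is left alone) or every element of $\mathcal{P}_s$ is a \emph{proper} subset of $s$. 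Since on path each type reports itself, $\Tilde{\mu}(s)=\mu_s$ for every $s\in\pi$.

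Next I compute, for each $s'\in\pi'$, the payoff Sender obtains after deviating to $\pi'$. The key observation is that by (C1) any $m\in M^{\pi'}(s')$ satisfies $s'\subset m$, and since $s'\neq\emptyset$ and $\pi$ is a partition, the only element of $\pi$ that can contain $s'$ is its parent $s$; moreover (C3) with (C2) guarantees $s\in M^{\pi'}(s')$. Hence the unique on‑path message available to $s'$ is $s$ itself. By Lemma~\ref{lma:truth_or_onpath} and condition (b), a ``refined‑out'' type $s'\subsetneq s$ therefore earns $\max\{\hat v(\Tilde{\mu}(s')),\hat v(\mu_s)\}$ (reveal himself, or masquerade as $s$), while a non‑refined type $s'=s\in\pi$ earns $\hat v(\mu_s)$ (revealing $s$ beats every off‑path message by WOPP and (C1), since $\supp(\mu_s)\subset\proj_{\Omega}(s)$). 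Grouping the realizations of $\pi'$ by their $\pi$‑parents and using $Pr(s)=\sum_{s'\in\mathcal{P}_s}Pr(s')$, the ex ante payoff from $\pi'$ minus the ex ante payoff from $\pi$ equals
\[
\sum_{s\in\pi}\ \sum_{s'\in\mathcal{P}_s}Pr(s')\,\max\bigl\{\hat v(\Tilde{\mu}(s'))-\hat v(\mu_s),\ 0\bigr\},
\]
a sum of nonnegative terms (the $\mathcal{P}_s=\{s\}$ cells contribute $0$). Consequently $\pi'$ is a \emph{strictly} profitable deviation if and only if at least one term is strictly positive, i.e.\ there is a positive‑probability realization $s'\in\pi'$ with $s'\subsetneq s\in\pi$ and $\hat v(\Tilde{\mu}(s'))>\hat v(\mu_s)=\hat v(\Tilde{\mu}(s))$, which is exactly the asserted equivalence.

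For the ``moreover'' clause, take such an $s'$. WOPP gives $\hat v(\Tilde{\mu}(s'))=\min\{\hat v(\mu):\supp(\mu)\subset\proj_{\Omega}(s')\}$, while $\supp(\mu_s)\subset\proj_{\Omega}(s)$. If $\proj_{\Omega}(s')$ were equal to $\proj_{\Omega}(s)$, then $\mu_s$ would be feasible in that minimization, forcing $\hat v(\Tilde{\mu}(s'))\le\hat v(\mu_s)$ and contradicting $\hat v(\Tilde{\mu}(s'))>\hat v(\mu_s)$; combined with the automatic inclusion $\proj_{\Omega}(s')\subset\proj_{\Omega}(s)$ coming from $s'\subset s$, this yields $\proj_{\Omega}(s')\subsetneq\proj_{\Omega}(s)$. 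I expect the only genuinely delicate step to be the identification in the second paragraph — using (C1)--(C3) together with Lemma~\ref{lma:truth_or_onpath} to pin down that a refined‑out type has exactly one on‑path masquerade, namely its parent cell — plus the harmless but necessary care about probability‑zero realizations so that the equivalence is stated for realizations that actually occur; everything else is an elementary telescoping of Bayes‑weighted payoffs.
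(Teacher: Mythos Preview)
Your argument is correct and mirrors the idea the paper sketches informally (the paper in fact omits the formal proof entirely): use (C1)--(C3) to pin down that each refined-out type $s'\subsetneq s$ has exactly one on-path masquerade, its parent $s$, so by Lemma~\ref{lma:truth_or_onpath} its payoff is $\max\{\hat v(\Tilde{\mu}(s')),\hat v(\mu_s)\}$, and then telescope the ex ante difference into a sum of nonnegative terms. You actually go further than the paper's sketch by treating both directions cleanly and deriving the ``moreover'' clause from WOPP; there is nothing to correct.
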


The formal proof is omitted for brevity. The basic idea of deviating from $\pi$ to $\pi'$ is simple. To illustrate, suppose that $\pi'$ entails two signal realizations $s'$ and $s''$ such that $s = s'\cup s''\in \pi$ and $\hat{v}(\Tilde{\mu}(s'))>\hat{v}(\Tilde{\mu}(s))$. Given $\pi'$, when the signal realization is $s'$, Sender reports $m=s'$, which makes the deviation detected and induces a better action even Receiver will impose the harshest punishment. When the signal realization is $s''$, Sender reports $m=s$, which is feasible due to (C3), leaving the deviation undetected so Receiver still acts as if Sender's true type is $s$.

Combining Lemma \ref{lma:fre}-\ref{lma:profitable_deviation_iff} allows me to, \textit{without loss of equilibrium outcomes}, restrict attention to a particular class of equilibria of the form $(\pi,(\gamma^{\pi'})_{\pi'\in \Pi},\Tilde{\mu})$ with the following properties:
\begin{itemize}
    \item[(P1)] For all $m\in \pi$,
    \begin{align*}
        \Tilde{\mu}(m)(\omega) = \frac{\mu_0(\omega)\ell(\{x|(\omega,x)\in s\})}{\sum_{\omega'\in\Omega}\mu_0(\omega')\ell(\{x|(\omega',x)\in s\})}\quad \textup{for all } \omega.
    \end{align*}
    For all $m\notin \pi$, 
    \begin{align*}
        \Tilde{\mu}(m)\in \argmin_{\mu\in\{\mu'\in \Delta(\Omega):\supp(\mu')\subset \proj_{\Omega}(m)\}}\hat{v}(\mu).
    \end{align*}
    \item[\textup{(P2)}] For all $s\in \pi$, $\gamma^{\pi}(s) = s$ and it is supported on
    \begin{align*}
        \argmax_{m\in M^{\pi}(s)}\hat{v}(\Tilde{\mu}(m)).
    \end{align*}
    \item[(P3)] There does not exist $s'\subsetneq s$ with $\proj_{\Omega}(s')\subsetneq \proj_{\Omega}(s)$ such that $\hat{v}(\Tilde{\mu}(s'))>\hat{v}(\Tilde{\mu}(s))$.
\end{itemize}

From this point onwards, I focus on equilbira satisfying (P1)-(P3).

\section{The Belief-Based Approach}\label{sec:belief-based}

Recall that Lemma \ref{lma:profitable_deviation_iff} provides a simple necessary and sufficient condition for the existence of a profitable deviation at the signal choice stage. Specifically, for a given strategy profile $(\pi,(\gamma^{\pi'})_{\pi'\in \Pi})$ and belief rule $\Tilde{\mu}$, Sender would not covertly deviate to a different signal if there does not exist a realization $s\in\pi$ and a lower element $s'\subsetneq s$ that satisfies both $\proj_{\Omega}(s')\subsetneq \proj_{\Omega}(s)$ and $\hat{v}(\Tilde{\mu}(s'))>\hat{v}(\Tilde{\mu}(s))$.

It turns out that this condition can be equivalently restated in the belief space. To see this, let $\mu$ denote Receiver's posterior belief when some type $s$ from a signal $\pi$ chooses to fully reveal his type. Correspondingly, let $\mu'$ denote Receiver's posterior belief if Sender deviates to another signal with some realization $s'\subset s$ and chooses to truthfully reveal it. Then the deviation is not profitable if $\hat{v}(\mu)\geq \hat{v}(\mu')$ for any such $\mu'$ with $\supp(\mu')\subsetneq \supp(\mu)$. 

Therefore, the absence of profitable deviations from a signal is equivalent to requiring its induced posteriors have the above-mentioned property which I term `additional-learning-proof' below. 

\begin{definition}
    A belief $\mu\in \Delta(\Omega)$ is called additional-learning-proof (ALP) if
    \begin{align*}
        \hat{v}(\mu) \geq \argmin_{\mu'\in \big\{\mu''\in \Delta(\Omega):\supp(\mu'')\subsetneq \supp(\mu)\big\}} \hat{v}(\mu').
    \end{align*}
\end{definition}

In other words, a belief $\mu$ satisfies the ALP property if Sender cannot strictly benefit from secretly learning a higher refinement-ordered signal, and subsequently `surprising' Receiver only when good news arrives (sending off-path messages that strictly improves upon his on-path payoff). Let $\Gamma^{ALP}\subset\Delta(\Omega)$ denote the set of all additional-learning-proof beliefs.

The above discussion suggests analyzing the model via the `belief-based approach', as is common in the communication literature.\footnote{For example, see \cite{KamenicaGentzkow:11} and \cite{LipnowskiRavid:20}.} This approach uses the ex ante distribution over Receiver's posterior beliefs as a \textit{substitute} for both Sender's strategy and the belief system. Clearly, every strategy profile $(\pi,(\gamma^{\pi'})_{\pi'\in \Pi})$ and belief rule $\Tilde{\mu}$ generate such a distribution over Receiver's posterior beliefs, $\tau\in \Delta(\Delta(\Omega))$. By Bayes's rule, this distribution of posteriors averages to the prior, $\mu_0$; that is, $\sum_{\supp(\tau)}\mu\tau(\mu)=\mu_0$. Moreover, Lemma \ref{lma:profitable_deviation_iff} implies that the support of $\tau$ must lie within the set $\Gamma^{ALP}$ if $\tau$ is induced by some equilibrium. The following main theorem of the article formally states this insight, and shows that the converse direction is true as well.

 Let $(\tau,\nu)$ denote a pair representing Receiver's posterior distribution, $\tau$, and Sender's ex ante expected payoff, $\nu$.

\begin{theorem}\label{thm:belief-based-characterize}
    If $(\pi,(\gamma^{\pi'})_{\pi'\in \Pi},\Tilde{\mu})$ is an equilibrium, then the induced pair $(\tau,\nu)$ must satisfy 
    \begin{itemize}
        \item[\textup{(}I\textup{)}] $\tau$ is a mean-preserving spread of $\mu_0$ supported on $\Gamma^{ALP}$, and
        \item[\textup{(}II\textup{)}] $\nu = E_{\tau}\hat{v}(\mu)$.
    \end{itemize}
    Conversely, if a pair $(\tau,\nu)$ satisfies \textup{(}I\textup{)} and \textup{(}II\textup{)}, then there must exist an equilibrium that induces it.
\end{theorem}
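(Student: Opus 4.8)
The plan is to prove the two implications separately. Throughout I would first use Lemmas~\ref{lma:fre} and~\ref{lma:off-path-worst} to reduce, without changing the induced pair $(\tau,\nu)$, to a fully revealing equilibrium with a WOPP belief map, so that $\gamma^{\pi}(s)=s$ and $\tilde\mu(s)=\mu_s$ for every $s\in\pi$; and I would lean on Lemmas~\ref{lma:refinement_deviation} and~\ref{lma:profitable_deviation_iff}, which already isolate exactly which signal deviations must be checked and when they pay. The economic content of the theorem is thereby mostly delegated to those lemmas, and what remains is assembly plus one honest construction.

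\emph{Necessity.} After the reduction, $\tau$ is the law of $s\mapsto\mu_s$ under the prior-weighted distribution on the realizations of $\pi$. Since $\mu_s(\omega)\,Pr(s)=\mu_0(\omega)\,Pr(s|\omega)$ and $\sum_{s\in\pi}Pr(s|\omega)=1$, the barycenter of $\tau$ is $\mu_0$, i.e.\ $\tau$ is a mean-preserving spread of $\mu_0$. To obtain $\supp(\tau)\subseteq\Gamma^{ALP}$, I would argue by contradiction: if some on-path $\mu_{\bar s}$ fails the ALP property, then there is a lower element $\bar s'\subsetneq\bar s$ with $\proj_\Omega(\bar s')\subsetneq\proj_\Omega(\bar s)$ and $\hat v(\tilde\mu(\bar s'))>\hat v(\mu_{\bar s})$ (using that $\tilde\mu(\bar s')$ is the WOPP belief and that every belief supported in $\proj_\Omega(\bar s')$ equals $\mu_{s''}$ for some $s''\subset\bar s'$), and Lemma~\ref{lma:profitable_deviation_iff} then exhibits a refinement of $\pi$ that is a profitable signal deviation, contradicting equilibrium. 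Part (II) is then immediate: under $\gamma^{\pi}(s)=s$ Receiver plays $a^*(\mu_s)$ and Sender earns $\hat v(\mu_s)$, so $\nu=E_\tau\hat v(\mu)$.

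\emph{Sufficiency.} Given $(\tau,\nu)$ satisfying (I)--(II), I would build an equilibrium that induces it. (i) Since $\tau$ is Bayes-plausible, there is a signal $\pi$ whose posterior distribution is $\tau$: for finitely supported $\tau$ one partitions $\Omega\times[0,1]$ into sets $s_i$ with $Pr(s_i)$ equal to the $\tau$-weight of $\mu^i$ and $\mu_{s_i}=\mu^i$, and the general case is the standard construction (\citealt{GreenStokey:78}; \citealt{KamenicaGentzkow:11}); I would also arrange $\proj_\Omega(s)=\supp(\mu_s)$ for every $s\in\pi$, which is possible because $\tau$-a.e.\ $\mu$ has $\supp(\mu)\subseteq\supp(\mu_0)$. (ii) Let $\gamma^{\pi}(s)=s$ for $s\in\pi$, and at every other signal $\pi'$ let $\gamma^{\pi'}(s')$ be any message in $\argmax_{m\in M^{\pi'}(s')}\hat v(\tilde\mu(m))$ (this is attained, since among off-path $m\supseteq s'$ the WOPP value is largest at $m=s'$ and at most one on-path message is available to $s'$; by Lemma~\ref{lma:truth_or_onpath} it may be taken of the reveal-or-masquerade form). (iii) Set $\tilde\mu(s)=\mu_s$ for $s\in\pi$ and $\tilde\mu(m)\in\argmin_{\mu\in\{\mu'\in\Delta(\Omega):\supp(\mu')\subseteq\proj_\Omega(m)\}}\hat v(\mu)$ otherwise; this minimum is attained because $\hat v=v\circ a^*$ is continuous (Berge's maximum theorem together with single-valuedness of $a^*$) on the compact set $\{\mu:\supp(\mu)\subseteq\proj_\Omega(m)\}$.

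It remains to check (a)--(c). Conditions (a) and (b) are routine: on path, $m=s$ is sent only by $s$, so Bayes gives $\mu_s$; off path, $\supp(\tilde\mu(m))\subseteq\proj_\Omega(m)$ by construction, which is consistency with evidence; and $s$ maximizes $\hat v(\tilde\mu(\cdot))$ over $M^{\pi}(s)$ because any other available message is off-path (an on-path $\tilde s\in\pi$ with $s\subseteq\tilde s$ forces $\tilde s=s$) with, by (C1), $\supp(\mu_s)\subseteq\proj_\Omega(s)\subseteq\proj_\Omega(m)$, hence a weakly lower WOPP value. The one substantive step is (c): by Lemma~\ref{lma:refinement_deviation} it suffices to rule out deviations to refinements $\pi'\trianglerighteq\pi$, and by Lemma~\ref{lma:profitable_deviation_iff} such a deviation requires a lower element $s'\subsetneq s\in\pi$ with $\proj_\Omega(s')\subsetneq\proj_\Omega(s)$ and $\hat v(\tilde\mu(s'))>\hat v(\mu_s)$; but $\tilde\mu(s')$ is a WOPP belief supported on $\proj_\Omega(s')\subsetneq\proj_\Omega(s)=\supp(\mu_s)$, so ALP-ness of $\mu_s$ (which holds because $\mu_s\in\supp(\tau)\subseteq\Gamma^{ALP}$) gives $\hat v(\tilde\mu(s'))\le\hat v(\mu_s)$, a contradiction; hence $\pi$ is ex ante optimal. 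The constructed profile is then an equilibrium inducing $\tau$ and ex ante payoff $E_\tau\hat v(\mu)=\nu$. The main obstacle I anticipate is the measure-theoretic step (i)---realizing a possibly non-finitely-supported mean-preserving spread as the posterior distribution of an honest partition of $\Omega\times[0,1]$ into measurable sets with the side condition $\proj_\Omega(s)=\supp(\mu_s)$---together with the minor attainment points for WOPP beliefs and off-path argmaxes; everything else is carried by Lemmas~\ref{lma:refinement_deviation} and~\ref{lma:profitable_deviation_iff}.
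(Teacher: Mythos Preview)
Your proposal is correct and follows essentially the same route as the paper: reduce via Lemmas~\ref{lma:fre}--\ref{lma:off-path-worst} to a fully revealing, WOPP equilibrium, use Lemma~\ref{lma:profitable_deviation_iff} to tie ALP-ness of on-path posteriors to the absence of profitable signal deviations for the necessity direction, and for sufficiency build $\pi$ from $\tau$ via the Green--Stokey construction, impose WOPP off path and truthful revelation on path, and verify (a)--(c) using Lemmas~\ref{lma:refinement_deviation}--\ref{lma:profitable_deviation_iff}. If anything, you are slightly more careful than the paper in two places: you flag and enforce the side condition $\proj_\Omega(s)=\supp(\mu_s)$ (which is what makes the ALP check bite in step (c)), and you note the attainment issues for the WOPP minimum and the off-path argmax, which the paper leaves implicit.
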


The key observation behind Theorem \ref{thm:belief-based-characterize} is that one can transform any Bayes plausible distribution of posteriors supported on $\Gamma^{ALP}$ into an equilibrium signal. The theorem also yields a convenient formula to determine Sender's maximal equilibrium value---one only needs to search over distributions of posteriors satisfying ($I$), as presented below.
\begin{align*}
    \max_{\tau\in \Gamma^{ALP}}E_{\tau}\hat{v}(\mu)\\
    \textup{s.t.}\quad \sum_{\supp(\tau)}\mu\tau(\mu)=\mu_0.
\end{align*}

Extending the concavification method of \cite{AumannMaschler:95} and \cite{KamenicaGentzkow:11}, the above optimaization problem leads to a useful geometric characterization of Sender's maximal equilibrium value, which I present in Corollary \ref{cor:concavification} below. A few more definitions are required. Let $W\subset \Delta(\Omega)$ be any subset of the belief space. Define $V_W$ to be the $W$-\textit{concavification} of $\hat{v}$, namely, the smallest concave function that is pointwise larger than $\hat{v}$ on the domain $W$. Let $\nu^*$ denote Sender's maximal equilibrium value.

\begin{corollary}\label{cor:concavification}
    Sender's maximal equilibrium value is given by $\hat{v}$'s concave envelope supported upon $\Gamma^{ALP};$ that is,
    \begin{align*}
        \nu^* = V_{\Gamma^{ALP}}(\mu_0).
    \end{align*}
\end{corollary}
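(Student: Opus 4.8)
The plan is to read Corollary~\ref{cor:concavification} off Theorem~\ref{thm:belief-based-characterize} together with the (constrained) concavification identity of Aumann--Maschler and Kamenica--Gentzkow. By the theorem, the equilibrium-inducible pairs $(\tau,\nu)$ are exactly those with $\tau$ Bayes plausible (i.e., $\sum_{\supp(\tau)}\mu\,\tau(\mu)=\mu_0$), $\supp(\tau)\subset\Gamma^{ALP}$, and $\nu=E_\tau\hat v(\mu)$. Hence Sender's maximal equilibrium value is the value of the program displayed just after the theorem, namely $\nu^*=\sup\{E_\tau\hat v(\mu):\supp(\tau)\subset\Gamma^{ALP},\ \sum_{\supp(\tau)}\mu\,\tau(\mu)=\mu_0\}$, and it remains only to identify this with $V_{\Gamma^{ALP}}(\mu_0)$. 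First I would record that the program is nonempty and $V_{\Gamma^{ALP}}(\mu_0)$ is well defined: every Dirac belief $\delta_\omega$ lies in $\Gamma^{ALP}$ (the constraint set $\{\mu':\supp(\mu')\subsetneq\supp(\delta_\omega)\}$ in the ALP inequality is empty, so the inequality holds vacuously; equivalently, by Lemma~\ref{lma:profitable_deviation_iff} there is no finer realization $s'\subsetneq s$ with $\proj_{\Omega}(s')\subsetneq\proj_{\Omega}(s)$ when $s$ already pins down the state), so $\mu_0\in\operatorname{conv}(\Gamma^{ALP})$, the $\tau$ supported on $\{\delta_\omega\}_{\omega\in\supp(\mu_0)}$ is feasible, and $\hat v$ is bounded --- indeed continuous, since $a^*$ is single-valued and upper hemicontinuous, hence continuous by Berge's theorem, and $v$ is continuous.

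For the easy inequality $\nu^*\le V_{\Gamma^{ALP}}(\mu_0)$ I would argue directly: for any feasible $\tau$, since $\hat v\le V_{\Gamma^{ALP}}$ on $\Gamma^{ALP}\supseteq\supp(\tau)$ by definition of the $\Gamma^{ALP}$-concavification, and since $V_{\Gamma^{ALP}}$ is concave, Jensen's inequality yields $E_\tau\hat v(\mu)\le E_\tau V_{\Gamma^{ALP}}(\mu)\le V_{\Gamma^{ALP}}(E_\tau\mu)=V_{\Gamma^{ALP}}(\mu_0)$; taking the supremum over $\tau$ gives the bound. For the reverse inequality I would use the hypograph description of the concave envelope: with $H\coloneqq\{(\mu,t):\mu\in\Gamma^{ALP},\ t\le\hat v(\mu)\}$, one has $V_{\Gamma^{ALP}}(\mu_0)=\sup\{t:(\mu_0,t)\in\operatorname{conv}H\}$. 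Any $(\mu_0,t)\in\operatorname{conv}H$ is, by Carath\'eodory in the (finite-dimensional, when $\Omega$ is finite) ambient space, a finite convex combination $\sum_i\alpha_i(\mu_i,t_i)$ with $(\mu_i,t_i)\in H$; then $\tau\coloneqq\sum_i\alpha_i\delta_{\mu_i}$ is feasible and $E_\tau\hat v(\mu)=\sum_i\alpha_i\hat v(\mu_i)\ge\sum_i\alpha_i t_i=t$. Letting $t\uparrow V_{\Gamma^{ALP}}(\mu_0)$ gives $\nu^*\ge V_{\Gamma^{ALP}}(\mu_0)$, and combining the two inequalities proves the corollary.

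The step I expect to require the most care --- the only genuine obstacle, and one of rigor rather than of idea --- is this last representation argument. The set $\Gamma^{ALP}$ need not be closed, and $\hat v$ restricted to it need not be upper semicontinuous along sequences whose supports shrink, so $\operatorname{conv}H$ may fail to be closed and the supremum defining $V_{\Gamma^{ALP}}(\mu_0)$ may only be approached, not attained, by finitely supported $\tau$; accordingly ``maximal'' should be read as ``supremal'' unless attainment is checked separately. When $\Omega$ is finite this is clean: $\Delta(\Omega)$ is a compact polytope and $\hat v$ is continuous, so one may pass to the closure $\overline{\Gamma^{ALP}}$ (on which $\hat v$ extends continuously and whose concavification agrees with that of $\Gamma^{ALP}$ at points of $\operatorname{conv}(\Gamma^{ALP})$) and recover attainment by compactness; in the general compact-metrizable case one replaces finite convex combinations by barycenters of measures on $\Gamma^{ALP}$ via Choquet's theorem and works up to an $\varepsilon$ of approximation. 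The remaining verifications --- concavity of $V_{\Gamma^{ALP}}$, the applicability of Jensen, and Bayes plausibility of $\tau=\sum_i\alpha_i\delta_{\mu_i}$ --- are routine.
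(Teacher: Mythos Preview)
Your proposal is correct and follows essentially the same approach as the paper: the paper does not give a separate proof of the corollary but simply presents it as the immediate geometric restatement of the optimization program displayed after Theorem~\ref{thm:belief-based-characterize}, citing the concavification method of Aumann--Maschler and Kamenica--Gentzkow. You have filled in the standard two-inequality concavification argument in detail, and your care about nonemptiness of the feasible set (via the Dirac beliefs) and about attainment when $\Gamma^{ALP}$ is not closed goes beyond what the paper itself supplies.
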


\section{The Binary-State Case}

In Section \ref{sec:belief-based}, I have identified the set of beliefs that can be induced in an equilibrium, namely, $\Gamma^{ALP}$. However, since this set is sensitive to details of the set of preferences, it can be difficult to characterize in cases when the state space is large. In this section, I consider a  binary-state version of the model. This allows me to visualize Sender's indirect utility, $\hat{v}$, the set of ALP beliefs, $\Gamma^{ALP}$, and the $\Gamma^{ALP}$-concavification of $\hat{v}$, $V_{\Gamma^{ALP}}$. Then, I can illustrate how Sender achieves his maximal equilibrium value in a more intuitive manner.

Let $\Omega = \{L,R\}$. Since the state space is binary, I abuse notation by associating each belief $\mu$ with $Pr(\omega=R)\in [0,1]$. The set of ALP beliefs now takes a simple form:
\begin{align*}
    \Gamma^{ALP} = \big\{0,1\big\}\cup \big\{\mu\in(0,1): \hat{v}(\mu)\geq \max\{\hat{v}(0),\hat{v}(1)\}\big\}.
\end{align*}

Figure \ref{fig:binary_example} shows an example of the construction of $V_{\Gamma^{ALP}}$. In the figure, $\mu$ denotes the probability that the state is $R$. Putting this probability on the horizontal axis, the figure plots an arbitrary indirect utility function $\hat{v}$ (left), along with its ALP belief domain (center) and the concave envelope on $\Gamma^{ALP}$.

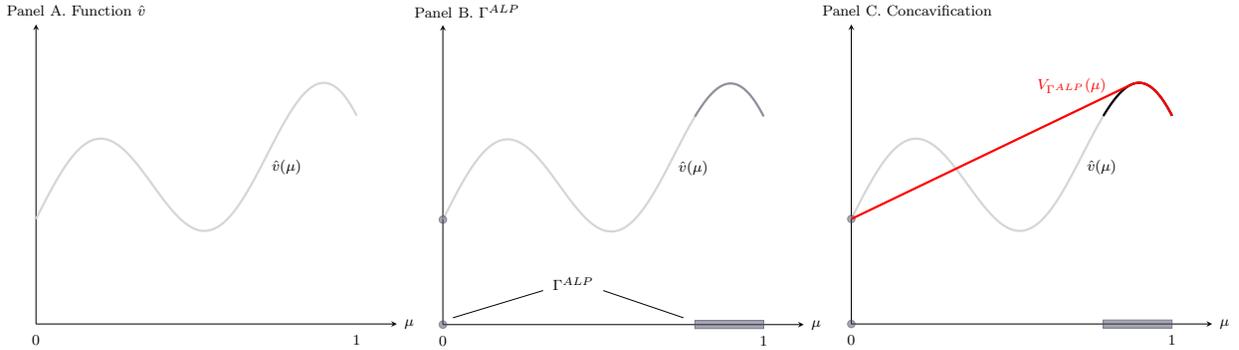
\begin{figure}[H]
\centering
\captionsetup{font=footnotesize}
\begin{minipage}{0.32\linewidth}
\begin{tikzpicture}[scale=.7]
    \begin{axis}
    [
    axis lines = left,
    xlabel={$\mu$},
    ylabel={},
    xmin=0.2, xmax=2,
    ymin=0, ymax=2.2,
    xtick=\empty,
    xticklabels=\empty,
    ytick=\empty,
    yticklabels=\empty,
    every axis x label/.style={at={(ticklabel* cs:1)},anchor=west,},
    scaled y ticks = false,every axis y label/.style={at={(ticklabel* cs:1)},anchor=south,},
    label style={font=\scriptsize},
    tick label style={font=\scriptsize},
    clip=false
    ]
    \draw (axis cs:.2,-.12) node {\scriptsize $0$};
    \draw (axis cs:1.8,-.12) node {\scriptsize $1$};
    \addplot[domain=0.2:1.8, samples=100,very thick,mylightgrey]{-1/5*cos(deg(3*x))-1/3*sin(3.5*deg(x))-1/2*cos(5*deg(x))+1/2*sin(5*deg(x))+1};
    \draw (axis cs:1.45,1.15) node {\scriptsize $\hat{v}(\mu)$};
    \draw (axis cs:.4,2.3) node {\scriptsize Panel A. Function $\hat{v}$};
    \end{axis}
\end{tikzpicture}
\end{minipage}
\begin{minipage}{0.32\linewidth}
\begin{tikzpicture}[scale=.7]
    \begin{axis}
    [
    axis lines = left,
    xlabel={$\mu$},
    ylabel={},
    xmin=0.2, xmax=2,
    ymin=0, ymax=2.2,
    xtick=\empty,
    xticklabels=\empty,
    ytick=\empty,
    yticklabels=\empty,
    every axis x label/.style={at={(ticklabel* cs:1)},anchor=west,},
    scaled y ticks = false,every axis y label/.style={at={(ticklabel* cs:1)},anchor=south,},
    label style={font=\scriptsize},
    tick label style={font=\scriptsize},
    clip=false
    ]
    \draw (axis cs:.2,-.12) node {\scriptsize $0$};
    \draw (axis cs:1.8,-.12) node {\scriptsize $1$};
    \addplot[domain=0.2:1.8, samples=100,very thick,mylightgrey]{-1/5*cos(deg(3*x))-1/3*sin(3.5*deg(x))-1/2*cos(5*deg(x))+1/2*sin(5*deg(x))+1};
    \draw (axis cs:1.45,1.15) node {\scriptsize $\hat{v}(\mu)$};

    \addplot[domain=1.457:1.8, samples=100,very thick,mygrey,opacity=.5]{-1/5*cos(deg(3*x))-1/3*sin(3.5*deg(x))-1/2*cos(5*deg(x))+1/2*sin(5*deg(x))+1};
    \draw[mygrey,fill=mygrey,opacity=.5] (axis cs:.2,.771) circle (.4ex);

    \draw[fill=mygrey,mygrey,opacity=.5] (axis cs:1.457,-.03) rectangle (axis cs:1.8,.03);
    \draw[mygrey,fill=mygrey,opacity=.5] (axis cs:.2,0) circle (.4ex);
    \draw (axis cs:.85,.3) node {\scriptsize $\Gamma^{ALP}$};
    \draw[black](axis cs:.25,.05)--(axis cs:.7,.25);
    \draw[black](axis cs:1.4,.05)--(axis cs:1.,.25);
    \draw (axis cs:.32,2.3) node {\scriptsize Panel B. $\Gamma^{ALP}$};
    \end{axis}
\end{tikzpicture}
\end{minipage}
\begin{minipage}{0.32\linewidth}
\begin{tikzpicture}[scale=.7]
    \begin{axis}
    [
    axis lines = left,
    xlabel={$\mu$},
    ylabel={},
    xmin=0.2, xmax=2,
    ymin=0, ymax=2.2,
    xtick=\empty,
    xticklabels=\empty,
    ytick=\empty,
    yticklabels=\empty,
    every axis x label/.style={at={(ticklabel* cs:1)},anchor=west,},
    scaled y ticks = false,every axis y label/.style={at={(ticklabel* cs:1)},anchor=south,},
    label style={font=\scriptsize},
    tick label style={font=\scriptsize},
    clip=false
    ]
    \draw (axis cs:.2,-.12) node {\scriptsize $0$};
    \draw (axis cs:1.8,-.12) node {\scriptsize $1$};
    \addplot[domain=0.2:1.8, samples=100,very thick,mylightgrey]{-1/5*cos(deg(3*x))-1/3*sin(3.5*deg(x))-1/2*cos(5*deg(x))+1/2*sin(5*deg(x))+1};
    \addplot[domain=1.457:1.8, samples=100,very thick]{-1/5*cos(deg(3*x))-1/3*sin(3.5*deg(x))-1/2*cos(5*deg(x))+1/2*sin(5*deg(x))+1};
    \draw[mygrey,fill=mygrey,opacity=.5] (axis cs:.2,.771) circle (.4ex);
    \draw[very thick,red](axis cs:.2,.771)--(axis cs:1.593,1.7558);
    \addplot[domain=1.593:1.8, samples=100,very thick,red]{-1/5*cos(deg(3*x))-1/3*sin(3.5*deg(x))-1/2*cos(5*deg(x))+1/2*sin(5*deg(x))+1};
    
    \draw[fill=mygrey,mygrey,opacity=.5] (axis cs:1.457,-.03) rectangle (axis cs:1.8,.03);
    \draw[mygrey,fill=mygrey,opacity=.5] (axis cs:.2,0) circle (.4ex);
    \draw (axis cs:1.45,1.15) node {\scriptsize $\hat{v}(\mu)$};
    \draw (axis cs:1.3,1.75) node {\scriptsize \textcolor{red}{$V_{\Gamma^{ALP}}(\mu)$}};
    \draw (axis cs:.48,2.3) node {\scriptsize Panel C. Concavification};
    \end{axis}
\end{tikzpicture}
\end{minipage}
\caption{An illustration of concavification in a binary-state example.}
\label{fig:binary_example}
\end{figure}


\section{Conclusion}

This paper studies persuasion in a setting where Sender tries to persuade Receiver with evidence that stems from sequential private experimentation and that can be selectively revealed in a right-truncation manner. I first turn the dynamic problem into a static one. Then I use the belief-based approach to analyze the value that Sender derives from such an activity of evidence collection and disclosure.

\newpage
\appendix
\section{Appendix}\label{sec:appendix}

\subsection{Proof of Theorem \ref{thm:belief-based-characterize}}

When $(\pi,(\gamma^{\pi'})_{\pi'\in \Pi},\Tilde{\mu})$ is an equilibrium and $\tau$ is the induced distribution of Receiver's (as well as Sender's, cf. Lemma \ref{lma:fre}) posterior beliefs, Bayesian consistency on the equilibrium path implies that $\sum_{\supp(\tau)}\mu\tau(\mu)=\mu_0$, where $\supp(\tau) = \{\mu_s\}_{s\in \pi}$. Moreover, since $(\pi,(\gamma^{\pi'})_{\pi'\in \Pi},\Tilde{\mu})$ is an equilibrium, by the argument in the beginning of Section \ref{sec:belief-based}, the posterior distribution $\tau$ must be supported on the set $\Gamma^{ALP}$. Since both players share the posterior, Sender's expected utility is equal to $\nu = E_{\tau}\hat{v}(\mu)$.

To prove the converse, let $(\tau,\nu)$ be a pair of posterior distribution and Sender's expected utility satisfying ($I$) and ($II$). I will construct an equilibrium which generates it. Since $\tau$ is a mean-preserving spread of $\mu_0$, there always exists a signal that induces it (\citealt{GreenStokey:78}). Let $\pi$ denote such a signal. Define the belief map $\Tilde{\mu}:S\rightarrow \Delta(\Omega)$ according to (P$_1$). For any $\pi'\in \Pi$, define
\[
    \gamma^{\pi'}(s') = 
    \begin{cases}
        s'&\quad\textup{if } \hat{v}(\Tilde{\mu}(s'))>\hat{v}(\Tilde{\mu}(s))\textup{ for some }s\in \pi\textup{ and }s'\subset s, \\
        s&\quad\textup{otherwise}.
    \end{cases}
\]
Now I verify that the constructed $(\pi,(\gamma^{\pi'})_{\pi'\in \Pi},\Tilde{\mu})$ is an equilibrium. First, it is straightforward that $\Tilde{\mu}$ satisfies weak belief consistency. Second, if Sender chooses the signal $\pi$, then for any realization $s\in \pi$, it must be optimal to truthfully reveal $s$ since by definition $\mu_s\in \Gamma^{ALP}$. If Sender chooses a different signal $\pi'\in \Pi\setminus\{\pi\}$, then $\gamma^{\pi'}(\cdot)$ defined above satisfies ex post optimality by Lemma \ref{lma:truth_or_onpath}. Last, using the fact that $\supp(\tau)\subset \Gamma^{ALP}$ again, one immediately obtains that $\pi$ is preferred to any other $\pi'$. Therefore, $(\pi,(\gamma^{\pi'})_{\pi'\in \Pi},\Tilde{\mu})$ constitutes an equilibrium which delivers the expected payoff $\nu = E_{\tau}\hat{v}(\mu)$. This concludes the proof.

\newpage
\bibliography{mybib}
\bibliographystyle{jpe} 

\end{document}